\newtheorem{theorem}{Theorem}
\newtheorem{proposition}[theorem]{Proposition}
\newtheorem{lemma}[theorem]{Lemma}
\newtheorem{definition}{Definition}
\newtheorem{remark}{Remark}
\newcommand{\F}{\mathcal{F}}
\newcommand{\Su}{\mathcal{S}}
\newcommand{\cO}{\mathcal{O}}
\newcommand{\mina}{\mathcal{M}}
\renewcommand{\epsilon}{\varepsilon}
\newcommand{\wt}{\widetilde}
\begin{document}

\sloppy


\begin{frontmatter}

\title{Minimal Forbidden Factors of Circular Words\tnoteref{note1}}
\tnotetext[note1]{A preliminary version of this paper was presented at the 11th International Conference on Words, WORDS~2017 \cite{FiReRi17}.}

\author[label2]{Gabriele Fici\corref{cor1}}
\ead{gabriele.fici@unipa.it}

\author[label2]{Antonio Restivo}
\ead{antonio.restivo@unipa.it}

\author[label2]{Laura Rizzo}
\ead{rizzolaura88@gmail.com}

\address[label2]{Dipartimento di Matematica e Informatica, Universit\`a di Palermo\\Via Archirafi 34, 90123 Palermo, Italy}

\cortext[cor1]{Corresponding author.}

\journal{Theoretical Computer Science}

\begin{abstract}
Minimal forbidden factors are a useful tool for investigating properties of words and languages. Two factorial languages are distinct if and only if they have different (antifactorial) sets of minimal forbidden factors. There exist  algorithms for computing the minimal forbidden factors of a word, as well as of a regular factorial language. Conversely, Crochemore et al.~[IPL, 1998] gave an algorithm that, given the trie recognizing a finite antifactorial language $M$, computes a DFA recognizing the language whose set of minimal forbidden factors is $M$. In the same paper, they showed that the obtained DFA is minimal if the input trie recognizes the minimal forbidden factors of a single word. We generalize this result to the case of a circular word. We discuss several combinatorial properties of the minimal forbidden factors of a circular word. As a byproduct, we obtain a formal definition of the factor automaton of a circular word. Finally, we investigate the case of minimal forbidden factors of the circular Fibonacci words.
\end{abstract}

\begin{keyword}
Minimal forbidden factor; finite automaton; factor automaton; circular word; Fibonacci words.
\end{keyword}

\end{frontmatter}

\section{Introduction}

Minimal forbidden factors are a useful combinatorial tool in several areas, ranging from symbolic dynamics to string processing. They have many applications, e.g.~in text compression (where they are also known as \emph{antidictionaries}) \cite{DBLP:conf/icalp/99}, in bioinformatics (where they are also known under the name \emph{minimal absent words}) \cite{Chairungsee2012109,MAW}, etc.
Given a word $w$, a word $v$ is called a \emph{minimal forbidden factor} of $w$ if $v$ does not appear as a factor in $w$ but all the proper factors of $v$ do. For example, over the alphabet $A=\{a,b\}$, the word $w=aabbabb$ has the following minimal forbidden factors: $aaa$, $aba$, $baa$, $babba$, $bbb$.

The theory of minimal forbidden factors is well developed, both from the combinatorial and the algorithmic point of view (see, for instance,~\cite{BeMiReSc00,computingregular,Crochemore98automataand,DBLP:conf/icalp/99,Mignosi02,FICI2006214}). In particular, there exist algorithms for computing the minimal forbidden factors of a single word \cite{Pinho2009,DBLP:conf/isit/FukaeOM12,MAW,PPAM2015}, as well as of a regular factorial language \cite{computingregular}. Conversely, Crochemore et al.~\cite{Crochemore98automataand}, gave an algorithm, called {\sc L-automaton} that, given a trie (tree-like automaton) recognizing a finite antifactorial set $M$, builds a deterministic automaton recognizing the language $L$ whose set of minimal forbidden factors is $M$. The automaton built by the algorithm is not, in general, minimal. However, if $M$ is the set of minimal forbidden factors of a single word $w$, then the algorithm builds the factor automaton of $w$, i.e., the minimal deterministic automaton recognizing the language of factors of $w$ (see~\cite{Crochemore98automataand}). 

The notion of a minimal forbidden factor has been recently extended to the case of circular words (a.k.a.~necklaces) \cite{infcomp,DBLP:conf/isit/OtaM13,6979851}. A circular word can be seen as a sequence of symbols drawn on a circle, where there is no beginning and no end. Although a circular word can be formally defined as an equivalence class of the free monoid under the relation of conjugacy, the fact that in a circular word there is no beginning and no end leads to a less clear definition of the notions like prefix, suffix and factor. For this reason, we consider the set of factors of a circular word $w$ as the (infinite) set of words that appear as a factor in some power of $w$. Although this set is infinite, we show that its set of minimal forbidden factors is always finite, as it coincides with the set of minimal forbidden factors of the word $ww$ that have length bounded by the length of $w$. 

As a main result, we prove that if $M$ is the set of minimal forbidden factors of a circular word, then algorithm {\sc L-automaton} with input a trie recognizing $M$ builds the minimal automaton accepting the set of factors of the circular word. To this end, we use combinatorial properties of the minimal forbidden factors of a circular word. This also allows us to show that it is possible to retrieve a circular word from its set of minimal forbidden factors in linear time with respect to the length of (any linearization of) the word.

Finally, we explore the case of circular Fibonacci words, and give a combinatorial characterization of their minimal forbidden factors.

\section{Preliminaires}

 Let $A$ be a finite alphabet, and let $A^{*}$ be the free monoid generated by $A$ under the operation of concatenation. The elements of $A^*$ are called \emph{words} over $A$. The \emph{length} of a word $w$ is denoted by $|w|$. The \emph{empty word}, denoted by $\epsilon$, is the unique word of length zero and is the neutral element of $A^{*}$.
If $x\in A$ and $w\in A^*$, we let $|w|_x$ denote the number of occurrences of $x$ in $w$. 
 
A \emph{prefix} (resp.~a \emph{suffix}) of a word $w$ is any word $u$ such that $w=uz$ (resp.~$w=zu$) for some word $z$. A \emph{factor} of $w$ is a prefix of a suffix (or, equivalently, a suffix of a prefix) of $w$.  
From the definitions, we have that $\epsilon$ is a prefix, a suffix and a factor of any word. 
A prefix/suffix/factor of a word is \emph{proper} if it is nonempty and does not coincide with the word itself. 
An \emph{occurrence} of a factor $u$ in $w$ is a factorization $w=vuz$. An occurrence of $u$ is \emph{internal} if both $v$ and $z$ are nonempty. 
The set of factors of a word $w$ is denoted by $\F_w$.



The word $\wt{w}$ obtained by reading $w$ from right to left is called the \emph{reversal} (or \emph{mirror image}) of $w$. A \emph{palindrome} is a word $w$ such that $\wt{w}=w$.
In particular, the empty word is  a palindrome. 

The \emph{conjugacy} is the equivalence relation over $A^*$ defined by \[w\sim w' \mbox { if and only if $\exists\ u,v \mid w=uv, w'=vu$}.\]
When the word $w$ is conjugate to the word $w'$, we say that $w$ is a \emph{rotation} of $w'$. 
An equivalence class $[w]$ of the conjugacy relation is called a \emph{circular word} (or \emph{necklace}). A representative of a conjugacy class $[w]$ is called a \emph{linearization} of the circular word $[w]$. Therefore, a circular word $[w]$ can be viewed as the set of all the rotations of a word $w$. 

A word $w$ is \emph{a power} of a word $v$ if there exists a positive integer $k>1$ such that $w=v^k$. 
Conversely, $w$ is {\em primitive} if $w=v^k$ implies $k=1$. Notice that a word is primitive if and only if any of its rotations also is. 
We can therefore extend the definition of primitivity to circular words straightforwardly. Notice that a word $w$ (resp.~a circular word $[w]$) is primitive if and only if there are precisely $|w|$ distinct rotations in the conjugacy class of $w$. 

\begin{remark}\label{rm1}
A circular word can be seen as a word drawn on a circle, where there is no beginning and no end. Therefore, the classical definitions of prefix/suffix/factor of a word lose their meaning for a circular word. In the literature, a factor of a circular word $[w]$ is often defined as a factor of any linearization $w$ of $[w]$. Nevertheless, since there is no beginning and no end, one can define a factor of $w$ as a word that appears as a factor in $w^k$ for some $k$. We will adopt this point of view in this paper.
\end{remark}


\subsection{Minimal Forbidden Factors}

We now recall some basic facts about minimal forbidden factors. For further details and references, the reader may see~\cite{Mignosi02,infcomp}. 

A {\em language} over the alphabet $ A $ is a set of finite words over $ A $, that is, a subset of $A^*$. 
A language is {\em factorial} if it contains all the factors of its words. The \emph{factorial closure} of a language $L$ is the language consisting of all factors of the words in $L$, that is, the language $\F_L=\cup_{w\in L} \F_w$. 

The counterparts of factorial languages are antifactorial languages. A language is called {\em antifactorial} if no word in the language is a proper factor of another word in the language. Dual to the notion of factorial closure, there also exists the notion of \emph{antifactorial part} of a language, obtained by removing the words that are factors of another word in the language.

\begin{definition}
 Given a factorial language $L$ over the alphabet $A$,  the (antifactorial) language of \emph{minimal forbidden factors} of $L$ is defined as \[\mina_L=\{a\in A \mid a\notin L\}\cup\{aub \in A^*\mid a,b\in A,\  aub\notin L,\ au,ub\in L\}.\] 
\end{definition}

A minimal forbidden factor of the language $L$ is therefore a word in $\mina_L$.

Every factorial language $L$ is uniquely determined by its (antifactorial) language of minimal forbidden factors $\mina_L$, through the equation 
\begin{equation}\label{maw}
L= A ^*\setminus  A ^*\mina_L A ^*.
\end{equation}
The converse is also true, since by the definition of a minimal forbidden factor we have \begin{equation}\label{maw2}
\mina_L= A  L\cap L A  \cap ( A ^*\setminus L).
\end{equation}
The previous equations define a bijection between factorial and antifactorial languages.

In the case of a single  word $w$ over an alphabet $A$, the set of minimal forbidden factors of $w$, that we denote by $\mina_w$, is defined as the antifactorial language $\mina_{\F_{w}}$. Indeed, a word $v\in A^*$ is a minimal forbidden factor of the word $w$ if $v$ is a letter of $A$ not appearing in $w$ or $v=aub$, with $a,b\in A$,  $aub\notin \F_w$ and  $au,ub\in \F_w$. 

For example, consider the word $w=aabbabb$ over the alphabet $A=\{a,b\}$. The set of minimal forbidden factors of $w$ is $\mina_w=\{aaa, aba, bbb, baa,babba\}$.  

\begin{remark}
 Applying \eqref{maw} and \eqref{maw2} to the language of factors of a single word, we have that,  given two words $u$ and $v$,  one has $u=v$ if and only if $\mina_u=\mina_v$, i.e., every word can be uniquely represented by its set of minimal forbidden factors.
\end{remark}

An important property of the minimal forbidden factors of a word $w$, which plays a crucial role in algorithmic applications, is that their number is linear in the size of $w$.  Let $w$ be a word of length $n$ over an alphabet $A$ of cardinality $\sigma$. In~\cite{Mignosi02} it is shown that the total number of minimal forbidden factors of $w$ is smaller than or equal to $\sigma n$. Actually, $\mathcal{O}( \sigma  n)$ is a tight asymptotic bound for the number of minimal forbidden factors of $w$ whenever $2 \leq  \sigma  \leq n$ \cite{infcomp}. They can therefore be stored on a trie, whose number of nodes is linear in the size of the word. Recall that a \emph{trie} representing a finite language $L$ is a tree-like deterministic  automaton recognizing $L$, where the set of states is the set of prefixes of words in $L$, the initial state is the empty word $\varepsilon$, the set of final states is a set of \emph{sink} states, and the set of transitions is $\{(u,a,ua) \mid a \in A\}$.

\subsection{Automata for Minimal Forbidden Factors}

Recall that a \emph{deterministic finite automaton} (DFA) is a $5$-tuple $\mathcal{A}=(Q,A,i,T,\delta)$, where $Q$ is the finite set of states, $A$ is the current alphabet, $i$ is the initial state, $T$ the set of terminal (or final) states, and $\delta:(Q\times A)\mapsto Q$ is the transition function. A word is \emph{recognized} (or \emph{accepted}) by $\mathcal{A}$ if reading $w$ character by character from the initial state leads to a final state. The language recognized (or accepted) by $\mathcal{A}$ is the set of all words recognized by $\mathcal{A}$. A language is  \emph{regular} if it is recognized by some DFA. A DFA $\mathcal{A}$ is \emph{minimal} if it has the least number of states among all the DFA's recognizing the same language as $\mathcal{A}$. The minimal DFA is unique.

It follows from basic closure properties of regular languages that the bijection between factorial and antifactorial languages expressed by \eqref{maw} and \eqref{maw2} preserves regularity, i.e., a factorial language is regular if and only if its language of minimal forbidden factors also is.

The \emph{factor automaton} of a word $w$ is the minimal DFA recognizing the (finite) language $\F_w$. The factor automaton of a word of length $n>3$ has at least\footnote{We do not require here and in the remainder of this paper that an automaton be complete. However, to make an automaton complete it is sufficient to add one sink state towards which all missing transitions go.} $n+1$  and at most $2n-2$ states \cite{C86,Bl83}. It can be built in $\mathcal{O}(n)$ time and space by an algorithm that also constructs the \emph{failure function} of the  automaton \cite{AutomataForMatchingPatterns}. The failure function of a state $p$ (different from the initial state) is a link to another state $q$ defined as follows: Let $u$ be a nonempty word and $p=\delta(i,u)$. Then $q=\delta(i,u')$, where $u'$ is the longest suffix of $u$ for which $\delta(i,u)\neq \delta(i,u')$. It can be shown that this definition does not depend on the particular choice of $u$~\cite{Crochemore98automataand}. An example of a factor automaton of a word is displayed in Figure~\ref{fig:exampleFA}.

\begin{figure}[tb]
\begin{center}
\includegraphics[height=33mm]{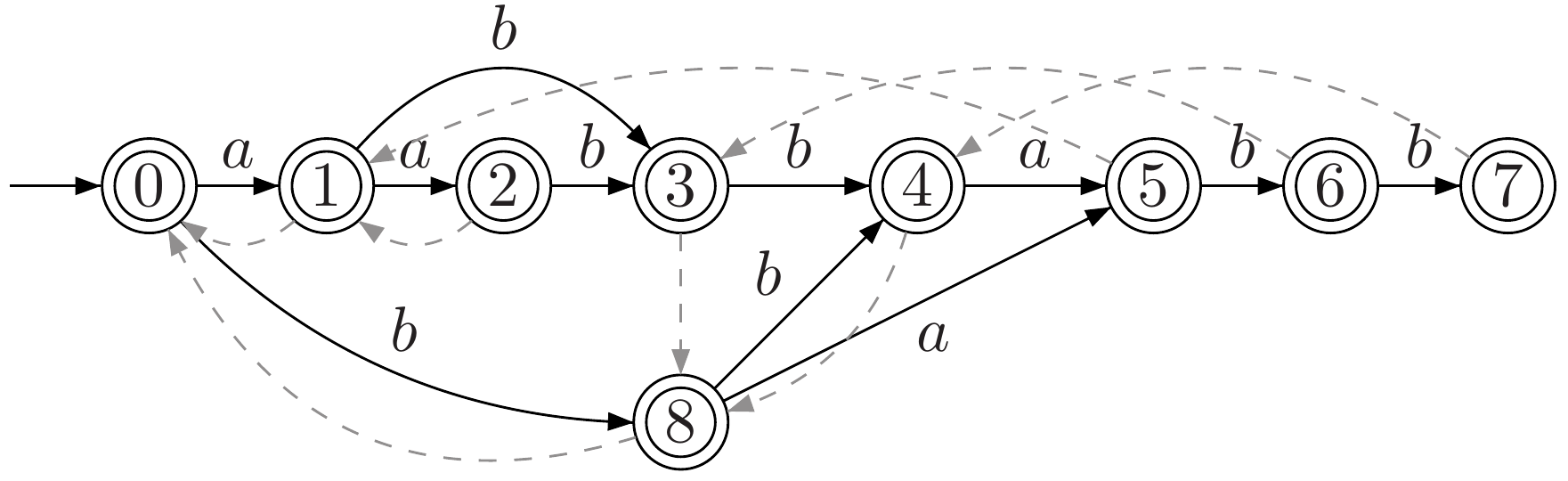}
\end{center}
\caption{The factor automaton of the word $w=aabbabb$. It is the minimal DFA recognizing the language $\F_w$ of factors of $w$. Dashed edges correspond to the failure function links.
\label{fig:exampleFA}}
\end{figure}

In \cite{computingregular}, the authors gave a quadratic-time algorithm to compute the set of minimal
forbidden factors of a regular factorial language $L$. 

However, computing the minimal forbidden factors of a single word can be done in linear time in the length of the word. Algorithm {\sc MF-trie}, presented in 
\cite{Crochemore98automataand}, builds the trie of the set $\mina_w$ having as input the factor automaton of
$w$, together with its failure function. 
Moreover, the states of the output trie recognizing the set $\mina_w$ are the same as those of the factor automaton of $w$, plus some sink states, which are the terminal states with no outgoing edges, corresponding to the minimal forbidden factors. This in particular proves that the size of the trie recognizing the set $\mina_w$ is linear in the length of $w$, since, as already mentioned, the size of the factor automaton of $w$ is linear in the length of $w$. This property is not an immediate consequence of the fact that the \emph{number} of minimal forbidden factors of $w$ is linear in the length of $w$, since in fact the \emph{sum of the lengths} of the minimal forbidden factors of $w$ can be quadratic in the length of $w$ (for example, if $w$ is of the form $ab^na$, then $ab^ia$ is a minimal forbidden factor of $w$ for every $i=0,1,\ldots,n-1$).

Despite these theoretical advantages, algorithm {\sc MF-trie} may not be the best algorithm to be used in applications. In recent years, other algorithms have been introduced  to compute the minimal forbidden factors of a word. The computation of minimal forbidden factors based on the construction of suffix arrays was considered in~\cite{Pinho2009}; although this algorithm has a linear-time performance in practice, 
the worst-case time complexity is $\cO(n^2)$. New $\cO(n)$-time and $\cO(n)$-space suffix-array-based algorithms were presented in~\cite{DBLP:conf/isit/FukaeOM12,MAW,PPAM2015}. 
A more space-efficient solution to compute all minimal forbidden factors in time $\cO(n)$ was also presented in~\cite{Belazzougui2013}.

We have discussed algorithms for computing the set of minimal forbidden factors of a given factorial language. We are now describing an algorithm performing the reverse operation.
Let $M$ be an antifactorial language. We let $L(M)$ denote the (factorial) language avoiding $M$, that is, the language of all the words that do not contain any word of $M$ as a factor. Clearly, from equations \eqref{maw} and \eqref{maw2}, we have that $L(M)$ is the unique language whose  set of minimal forbidden factors is $M$, i.e., the unique language $L$ such that $\mina_{L}=M$.

For a finite antifactorial language $M$, algorithm {\sc L-automaton}  \cite{Crochemore98automataand} builds a DFA recognizing $L(M)$. It is presented in Figure~\ref{fig:laut}. The algorithm runs in linear time in the size of the trie storing the words of $M$. It uses a failure function $f$ defined in a way analogous to the one used for building the factor automaton.

\begin{figure}[tb]
\begin{center}\small
 \fbox{
  \begin{minipage}{10cm}
   \begin{tabbing}
 xxx \= xxx \= xxx \= xxx \kill
{\sc L-automaton} (trie $\mathcal{T}=(Q,A,i,T,\delta')$) \phantom{xxx xxx xxx}\\
\,\,\,1. \> \textbf{for} each $a\in A$\\
\,\,\,2. \> \> \textbf{if} $\delta'(i,a)$ is defined\\
\,\,\,3. \> \> \> $\delta(i,a)\leftarrow \delta'(i,a)$\\
\,\,\,4. \> \> \> $f(\delta(i,a))\leftarrow i$\\
\,\,\,5. \> \> \textbf{else} \\
\,\,\,6. \> \> \> $\delta(i,a)\leftarrow i$\\
\,\,\,7. \> \textbf{for} each state $p\in Q\setminus \{i\}$ in breadth-first search \textbf{and} each $a\in A$\\
\,\,\,8. \> \> \textbf{if} $\delta'(p,a)$ is defined\\
\,\,\,9. \> \> \> $\delta(p,a)\leftarrow \delta'(p,a)$\\
10.\,\,\, \> \> \> $f(\delta(p,a))\leftarrow \delta(f(p),a)$\\
11.\,\,\, \> \> \textbf{else if} $p\notin T$\\
12.\,\,\, \> \> \>   $\delta(p,a)\leftarrow \delta(f(p),a)$\\
13.\,\,\, \> \> \textbf{else} \\
14.\,\,\, \> \> \>  $\delta(p,a)\leftarrow p$\\
15.\,\,\, \> \textbf{return} $(Q,A,i,Q\setminus T,\delta)$
   \end{tabbing}
  \end{minipage}
}
\end{center}
\caption{Algorithm {\sc L-automaton}. It builds an automaton recognizing the language $L(M)$ of words avoiding an antifactorial language $M$ on the input trie $\mathcal{T}$ accepting $M$.\label{fig:laut}}
\end{figure}

The algorithm can be applied for retrieving a word from its set of minimal forbidden factors, and this can be done in linear time in the length of the word, since, as already mentioned, the size of the trie of minimal forbidden factors of a word is linear in the length of the word.
Notice that, even if $M$ is finite, the language $L(M)$ can be finite or infinite. Moreover, even in the case that $L(M)$ is also finite, it can be the language of factors of a single word or of a finite set of words.

Algorithm {\sc L-automaton} builds an automaton recognizing the language $L(M)$ of words avoiding a given antifactorial language $M$, but this automaton is not, in general, minimal. However, the following result holds \cite{Crochemore98automataand}:

\begin{theorem}\label{thm:min}
 If $M$ is the set of the minimal forbidden factors of a finite word $w$, then the automaton output from algorithm  {\sc L-automaton} on the input trie recognizing $M$, after removing sink states, is the factor automaton of $w$, i.e., it is minimal. 
\end{theorem}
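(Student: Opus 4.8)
The plan is to prove that the automaton obtained from algorithm {\sc L-automaton} run on the trie of $M=\mina_w$ by deleting the sink states, call it $\mathcal{B}'$, is accessible, reduced, and recognizes $\F_w$; by uniqueness of the minimal DFA this forces $\mathcal{B}'$ to be the factor automaton of $w$. That $\mathcal{B}'$ recognizes $\F_w$ is the easy part: by \eqref{maw} the language avoiding the antifactorial set $M=\mina_w$ is $\F_w$, so by the correctness of {\sc L-automaton} its output $\mathcal{B}=(Q,A,i,Q\setminus T,\delta)$ recognizes $\F_w$; the sink states (the words of $M$) are non-final and, by line~14, absorbing, so no computation ever leaves a sink to reach a final state, and deleting the sinks (and the transitions into them) does not change the recognized language. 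In particular every state of $\mathcal{B}'$ is final, hence productive (it accepts $\varepsilon$), so $\mathcal{B}'$ has no useless states.

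Next I would establish accessibility and, along the way, identify the residual language at each state. A state $p$ of $\mathcal{B}'$ is a node of the trie, i.e.\ a word that is a proper prefix of some word of $M$; each prefix of $p$ is again a trie node and, $M$ being antifactorial, is not a sink, so lines~3 and~9 guarantee that reading $p$ letter by letter from $i$ uses only the original trie edges and lands on $p$. Hence $\widehat\delta(i,pz)=\widehat\delta(p,z)$ for every word $z$, and therefore the set of words accepted by $\mathcal{B}'$ from state $p$ is exactly $\{z : pz\in\F_w\}$, the set $R_w(p)$ of right extensions of $p$ in $w$ (note $R_w(p)\ni\varepsilon$, since $p\in\F_w$).

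It remains to prove that $\mathcal{B}'$ is reduced, and this is the heart of the matter: I must show that two distinct non-sink trie nodes $p\neq q$ satisfy $R_w(p)\neq R_w(q)$, equivalently that $p$ and $q$ lead to distinct states of the factor automaton $\mathcal{A}_w$ of $w$. This is precisely the content of the relationship between the trie of $\mina_w$ and $\mathcal{A}_w$ established through the {\sc MF-trie} construction of \cite{Crochemore98automataand} and recalled above: the non-sink states of the trie are in bijection with the states of $\mathcal{A}_w$, the node $u$ corresponding to the state reached in $\mathcal{A}_w$ by reading $u$. The special case $p=\varepsilon$ can be checked directly and is illustrative: if $q\neq\varepsilon$ were a non-sink trie node with $R_w(q)=R_w(\varepsilon)=\F_w$, write the minimal forbidden factor having $q$ as a proper prefix as $qvc$ with $c\in A$; then $vc$, being a proper suffix of a word of $M$, lies in $\F_w$, whereas $q(vc)=qvc\notin\F_w$, contradicting $R_w(q)=\F_w$. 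Granting reducedness in general, $\mathcal{B}'$ is an accessible, reduced DFA, all of whose states are productive, recognizing $\F_w$, hence it is the (unique) minimal DFA of $\F_w$, i.e.\ the factor automaton of $w$. The main obstacle is thus this last reducedness step, which amounts to the combinatorial fact that the trie of $\mina_w$ carries exactly one representative of each right-extension (Nerode) class of $\F_w$; that is exactly where the fine analysis of the {\sc MF-trie} construction is needed, and it is the point that will require the combinatorial properties of minimal forbidden factors.
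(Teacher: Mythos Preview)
The paper does not give its own proof of this theorem: it is stated with a citation to \cite{Crochemore98automataand} and left unproved. So there is no in-paper argument to compare your proposal against.

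On its own merits, your outline is sound. Correctness of the recognized language and accessibility are handled properly; the identification of the residual at a non-sink trie node $p$ with the right-extension set $R_w(p)$ is exactly right and is the key observation. You are also correct that reducedness is the only substantive step, and your illustrative argument for $p=\varepsilon$ is clean. In fact that argument generalizes immediately to the case where one of $p,q$ is a proper suffix of the other: if $q=rp$ with $r\neq\varepsilon$ and $qz\in M$ for some nonempty $z$, then $pz$, being a proper suffix of a minimal forbidden factor, lies in $\F_w$, whereas $qz\notin\F_w$, so $z\in R_w(p)\setminus R_w(q)$. This is precisely the mechanism the paper uses in its proof of the circular analogue (Theorem~\ref{thm:min2}) for the ``same ending position'' case, and you could state it in this generality rather than only for $p=\varepsilon$.

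The remaining case, where neither of $v_p,v_q$ is a suffix of the other, is what you defer to the {\sc MF-trie} analysis of \cite{Crochemore98automataand}. That deferral is legitimate and not circular: {\sc MF-trie} is an independently established construction whose output trie has its non-sink nodes in bijection with the states of the factor automaton, and once you know that the node labelled $u$ corresponds to the factor-automaton state reached by $u$, reducedness follows. But you should be aware that this is exactly the content of the theorem you are proving (just phrased from the {\sc MF-trie} side), so your proposal is really an explanation of why the two results in \cite{Crochemore98automataand} are equivalent rather than an independent proof. If you want a self-contained argument, one route is to mimic the occurrence-based reasoning the paper uses for Theorem~\ref{thm:min2}: look at the rightmost occurrences of $v_p$ and $v_q$ in $w$ and show that if they end at different positions then the corresponding suffixes of $w$ distinguish the two states.
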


To see that the minimality described in the previous theorem does not hold in general, consider for instance the antifactorial language $M=\{aa,ba\}$. It can be easily checked that, taking as input a trie recognizing $M$, algorithm  {\sc L-automaton} outputs an automaton which, after removing sink states, has $3$ states,  while the minimal automaton of the language $L(M)=\{b^n\mid n\geq 0\}\cup\{ab^n\mid n\geq 0\}$ has $2$ states (see Figure~\ref{fig:ex}). 

\begin{figure}[tb]
\centering  
\begin{minipage}{.3\linewidth}
\includegraphics[height=47mm]{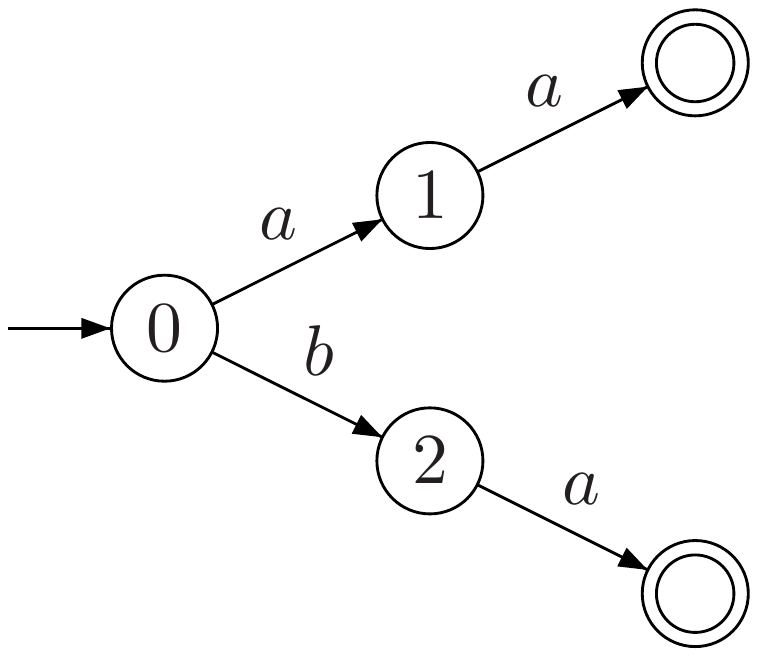}
 \end{minipage}
\begin{minipage}{.3\linewidth}
\includegraphics[height=28mm]{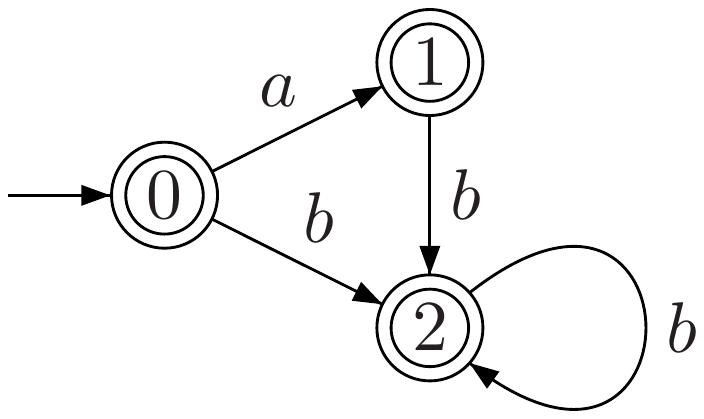}
 \end{minipage}
 \begin{minipage}{.3\linewidth}
\includegraphics[height=13mm]{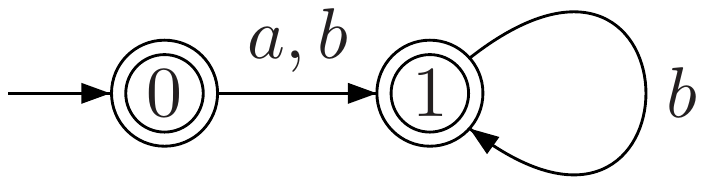}
 \end{minipage}
 \vspace{4mm}
\caption{\label{fig:ex}The trie $\mathcal{T}$ recognizing $M=\{aa,ba\}$ (left), the automaton output from algorithm  {\sc L-automaton} on input $\mathcal{T}$ after removing sink states (center) and the minimal automaton recognizing the language $L(M)=\{b^n\mid n\geq 0\}\cup\{ab^n\mid n\geq 0\}$ (right).}
\end{figure}

We will prove in the next section that this minimality property still holds true in the case of minimal forbidden factors of a circular word. 

From the set $\mina_w$ of the minimal forbidden factors of a
finite word $w$, one can reconstruct the word $w$ in linear time. To this end, one can apply algorithm  {\sc L-automaton} on the trie recognizing $\mina_w$. After deleting
the sink states of the obtained automaton, one can retrieve the longest path starting from the initial
state by using a classical topological sort procedure. This path corresponds to the word $w$.

\section{Minimal Forbidden Factors of a Circular Word}

Given a word $w$, the language \emph{generated} by $w$ is the language $w^*=\{w^k\mid k\geq 0\}=\{\varepsilon, w, ww, www, \ldots\}$. Analogously, the language $L^*$ generated by $L\subset A^*$ is the set of all possible concatenations of words in $L$, i.e.,~$L^* = \{\epsilon\}\cup\{w_1w_2\cdots w_n \mid  w_i \in L \mbox{ for $i = 1, 2,\ldots,n$}\}$.

 Let $w$ be a word of length at least $2$. The language $w^*$ generated by $w$ is not a factorial language, nor is the language generated by all the rotations of $w$. Nevertheless, if we take the factorial closure of the language generated by $w$, then of course we get a factorial language $\F_{w^*}$. Now, if $z$ is conjugate to $w$, then although $w$ and $z$ generate different languages, the factorial closures of the languages they generate coincide, i.e., $\F_{w^*}=\F_{z^*}$. Moreover, for any power $w^k$ of $w$, $k>0$, one clearly has $\F_{w^*}=\F_{(w^k)^*}$.

Based on the previous discussion, and on Remark~\ref{rm1}, we give the following definition: We let the set of \emph{factors of a circular word} $[w]$ be the (factorial) language $\F_{w^*}$,  where $w$ is any linearization of $[w]$. By the previous observation, this definition is independent of the particular choice of the linearization. Moreover, we can suppose that $[w]$ is a primitive circular word.

The set of minimal forbidden factors of the circular word $[w]$ is defined as the set $\mina_{\F_{w^*}}$ of minimal forbidden factors of the language $\F_{w^*}$, where $w$ is any linearization of $[w]$. To simplify the notation, in the remainder of this paper we will let $\mina_{[w]}$ denote the set of minimal forbidden factors of the circular word $[w]$. 

For instance, if $[w]=[aabbabb]$,  then we have
$$\mina_{[w]}=\{aaa, aba, bbb, aabbaa, babbab\}.$$ 

Notice that $\mina_{[w]}$ does not coincide with the set of minimal forbidden factors of the factorial closure of the language of all the rotations of $w$.

Although $\F_{w^*}$ is an infinite language, the set $\mina_{[w]}=\mina_{\F_{w^*}}$ of minimal forbidden factors of $[w]$ is always finite. More precisely, we have the following structural lemma.

\begin{lemma}\label{lem:m}
  \label{lem:circ}
  Let $[w]$ be a circular word and $w$ any linearization of $[w]$. Then 
\begin{equation}\label{eq:cmf}
   \mina_{[w]}=\mina_{ww}\cap A^{\leq |w|}.
\end{equation}
\end{lemma}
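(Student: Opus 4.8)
The plan is to prove the two inclusions in \eqref{eq:cmf} separately, using the characterization \eqref{maw2} of minimal forbidden factors together with a careful analysis of which factors of $ww$ already appear in $\F_{w^*}$.

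First I would establish the inclusion $\mina_{[w]}\subseteq\mina_{ww}\cap A^{\leq|w|}$. Let $v\in\mina_{[w]}$. Since $\F_{ww}\subseteq\F_{w^*}$, the word $v$ is forbidden in $ww$ as well, i.e.\ $v\notin\F_{ww}$; conversely every proper factor of $v$ lies in $\F_{w^*}$ by minimality, and I would show that a proper factor of $v$ that lies in $\F_{w^*}$ must actually lie in $\F_{ww}$, \emph{provided} $|v|\leq|w|$. Indeed, a nonempty word $u\in\F_{w^*}$ of length at most $|w|-1$ occurs inside some $w^k$, hence inside some length-$|w|$ window of $w^k$, hence (by conjugacy, thinking of $w^k$ as wrapping around) inside $ww$. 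So the crucial sub-task is the length bound: I must show $|v|\leq|w|$ for every $v\in\mina_{[w]}$. This follows because if $v=aub$ with $au,ub\in\F_{w^*}$, and $|u|\geq|w|$, then $u$ contains a full circular period of $[w]$ as a factor; writing $u$ as occurring in a high power $w^N$, the occurrences of $au$ and of $ub$ can be aligned within the periodic word $w^N$ so that $aub$ itself occurs there — contradicting $v\notin\F_{w^*}$. (This is essentially a Fine–Wilf / periodicity argument: a long enough factor pins down its position modulo the period.) With $|v|\leq|w|$ in hand, the first inclusion is complete, and moreover the argument above shows $au,ub\in\F_{ww}$, so $v\in\mina_{ww}$ by \eqref{maw2}.

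Next I would prove $\mina_{ww}\cap A^{\leq|w|}\subseteq\mina_{[w]}$. Take $v\in\mina_{ww}$ with $|v|\leq|w|$. Then every proper factor of $v$ lies in $\F_{ww}\subseteq\F_{w^*}$. It remains to check $v\notin\F_{w^*}$; suppose for contradiction $v$ occurs in some $w^k$. Since $|v|\leq|w|$, such an occurrence spans at most two consecutive copies of $w$ in $w^k$, i.e.\ $v$ occurs in $ww$ (after possibly shifting the occurrence into a window of length $2|w|$, which is a conjugate block and hence literally a factor of $www$; a short argument reduces an occurrence in $www$ of a word of length $\leq|w|$ to one in $ww$). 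This contradicts $v\notin\F_{ww}$. Hence $v\notin\F_{w^*}$, and together with the fact that its proper factors are in $\F_{w^*}$ we get $v\in\mina_{[w]}$.

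The main obstacle I anticipate is making the "spans at most two copies / wraps around" arguments rigorous: one must be careful that an occurrence of a short word in $w^*$ can always be transported to an occurrence in $ww$, using that any length-$|w|$ factor of $w^k$ is a factor of $ww$ (by conjugacy), and symmetrically that any length-$\leq|w|$ factor of $ww$ extends to a factor of $w^*$. A clean way to package this is the auxiliary observation that $\F_{w^*}\cap A^{\leq|w|}=\F_{ww}\cap A^{\leq|w|}$, which I would prove first and then invoke in both inclusions. The periodicity argument for the upper bound $|v|\leq|w|$ on the length of a minimal forbidden factor is the other delicate point, but it is standard once one phrases it as: if $au$ and $ub$ both occur in $w^N$ and $|u|\geq|w|$, the two occurrences of the common part $u$ are forced into the same position modulo $|w|$, so they can be glued.
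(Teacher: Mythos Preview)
Your overall strategy matches the paper's: both inclusions rest on the observation $\F_{w^*}\cap A^{\leq|w|}=\F_{ww}\cap A^{\leq|w|}$ (which you isolate explicitly and the paper uses implicitly), and the only nontrivial point is the length bound $|v|\leq|w|$ for $v\in\mina_{[w]}$.

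There is, however, an off-by-one slip in your length-bound argument. You assume $|u|\geq|w|$ and deduce that all occurrences of $u$ in $w^N$ lie at the same residue modulo $|w|$, allowing $au$ and $ub$ to be glued into an occurrence of $aub\in\F_{w^*}$. That reasoning is sound, but it only yields $|u|<|w|$, hence $|v|=|u|+2\leq|w|+1$, which is one more than the statement requires. The paper closes this gap by observing that, since $aub\notin\F_{w^*}$ while $au\in\F_{w^*}$ (and every word in $\F_{w^*}$ is right-extendable), there is some $\bar b\neq b$ with $au\bar b\in\F_{w^*}$; symmetrically $\bar aub\in\F_{w^*}$ for some $\bar a\neq a$. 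Thus $u$ admits two distinct left extensions and two distinct right extensions in $\F_{w^*}$, and no factor of $\F_{w^*}$ of length $\geq|w|-1$ can do so. This gives $|u|\leq|w|-2$, i.e.\ $|v|\leq|w|$. Your periodicity argument can be repaired in the same spirit: already for $|u|\geq|w|-1$, two occurrences of $u$ in $w^N$ at different residues modulo $|w|$ would, via Fine--Wilf on a window of length $|u|+d\geq|w|$ (where $d$ is the shift between them), force a nontrivial period of $w$; so the position of $u$ is determined modulo $|w|$ and the gluing still goes through at this lower threshold.
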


\begin{proof}
If $v$ is an element of $\mina_{ww}\cap A^{\leq |w|},$ then clearly  $v\in \mina_{\F_{w^*}}=\mina_{[w]}$.

Conversely, let $aub$, with $a,b\in A$ and $u\in A^*$, be an element in $\mina_{[w]}=\mina_{\F_{w^*}}$ (the case of minimal forbidden factors of length $1$ is straightforward). Then $aub\notin \F_{w^*}$, while $au,ub\in \F_{w^*}$. So, there exists some letter $\bar{b}$ different from $b$ such that $au\bar{b}\in \F_{w^*}$ and a letter $\bar{a}$ different from $a$ such that $\bar{a}ub\in \F_{w^*}$. Therefore, $au,\bar{a}u,ub,u\bar{b}\in \F_{w^*}$.
It is readily verified that any word of length at least $|w|-1$ cannot be extended to the right nor to the left by different letters in $\F_{w^*}$. Hence $|aub|\leq |w|$. Since $au$ and $ub$ are factors of some rotation of $w$, we have $au,ub\in \F_{ww}$, whence $aub\in \mina_{ww}$.
\end{proof}

The equality (\ref{eq:cmf}) was first introduced as a definition for the set of minimal forbidden factors of a circular word in~\cite{6979851}. In fact, it can be efficiently exploited in applications of minimal forbidden factors of circular words \cite{6979851,InfComp18}.

It should be noticed here that the fact that the set $\mina_{[w]}=\mina_{\F_{w^*}}$ is finite although the set $\F_{w^*}$ is not, is a property that cannot be extended to sets of words of cardinality greater than $1$. As an example, let $L=\{b,aa\}$; then the set $\mina_{\F_{L^*}}$ is infinite, as it contains the words $ba^{2n+1}b$ for all $n\geq 1$.

About the number of minimal forbidden factors of a circular word, we have the following bounds.

\begin{lemma}\label{lem:nb}
 Let $[w]$ be a circular word of length $n$ over the alphabet $A$ and let $A(w)$ be the set of letters of $A$ that occur in $w$. Then
\begin{equation}
|A|-1 \leq |\mina_{[w]}| \leq |A|+(n-1)|A(w)|-n.
\end{equation}
In particular, if $|A(w)|=|A|$, then $|\mina_{[w]}|\leq n(|A|-1)$.
\end{lemma}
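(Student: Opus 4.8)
The plan is to prove the two bounds separately, reducing first to primitive $[w]$. Indeed, if $[w]=[v^{d}]$ with $v$ primitive, then $\F_{w^{*}}=\F_{v^{*}}$, whence $\mina_{[w]}=\mina_{[v]}$ and $A(w)=A(v)$; since $\ell\mapsto|A|+(\ell-1)|A(w)|-\ell$ is non-decreasing, the bounds claimed for $[v]$ (of length $|v|\le n$) imply those for $[w]$. So assume $[w]$ primitive, write $\F=\F_{w^{*}}$, $n=|w|$, $k=|A(w)|$, and, for $z\in\F$, put $L(z)=\{a\in A\mid az\in\F\}$, $R(z)=\{b\in A\mid zb\in\F\}$; let $f(\ell)$ be the number of distinct factors of $[w]$ of length $\ell$. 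Throughout I would use the elementary facts that, $[w]$ being circular, every factor extends on both sides inside $\F$ (so $L(z)\neq\emptyset\neq R(z)$), that $L(z),R(z)\subseteq A(w)$, that $\sum_{|z|=\ell}|L(z)|=\sum_{|z|=\ell}|R(z)|=f(\ell+1)$, and that $f(0)=1$, $f(1)=k$, $f(n)=n$.

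The crucial combinatorial input is the identity $f(n-1)=n$: the primitive circular word $[w]$ has $n$ distinct factors of length $n-1$. I would prove it by contradiction: were some length-$(n-1)$ factor to occur at two distinct positions, $w$ would agree with a nontrivial cyclic shift of itself at every point of $\mathbb{Z}/n\mathbb{Z}$ except one, and chasing that exceptional point around its shift-orbit forces the missing equality too, so $w$ would have a period $<n$, contradicting primitivity. This short cyclic-period argument is the only place primitivity is genuinely used, and is the step I expect to be the main (if modest) obstacle.

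Next I would split $\mina_{[w]}$ into the $|A|-k$ letters of $A$ not occurring in $w$ and the minimal forbidden factors of length $\ge 2$, whose number I call $N$; thus $|\mina_{[w]}|=(|A|-k)+N$, and it remains to show $k-1\le N\le n(k-1)$. A minimal forbidden factor $aub$ of length $\ge 2$ corresponds bijectively to the pair $(a,x)$ with $x=ub\in\F$, $1\le|x|\le n-1$ (the length bound being Lemma~\ref{lem:m}), and $a\in L(x^{-})\setminus L(x)$, where $x^{-}$ is $x$ with its last letter deleted; as $L(x)\subseteq L(x^{-})$, this yields the identity $N=\sum_{x\in\F,\,1\le|x|\le n-1}\bigl(|L(x^{-})|-|L(x)|\bigr)$, a sum of non-negative terms. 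For the lower bound I would fix any factor $z$ of length $n-1$; by $f(n-1)=n$ it occurs exactly once on the circle, so it has a single left extension and $|L(z)|=1$, whereas $|L(\varepsilon)|=|A(w)|=k$. Restricting the identity to the $n-1$ nonempty prefixes of $z$ telescopes to $|L(\varepsilon)|-|L(z)|=k-1$, and non-negativity of the remaining terms gives $N\ge k-1$, i.e.\ $|\mina_{[w]}|\ge|A|-1$.

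For the upper bound I would regroup the same identity by $|x|$, using $\sum_{|z|=j}|R(z)|=\sum_{|z|=j}|L(z)|=f(j+1)$ to reindex, obtaining $N=\sum_{\ell=1}^{n-1}\bigl(\Sigma_{\ell-1}-f(\ell+1)\bigr)$ with $\Sigma_{j}:=\sum_{|y|=j}|R(y)|\,|L(y)|$. Since $R(y)\subseteq A(w)$, $\Sigma_{j}-f(j+1)=\sum_{|y|=j}|R(y)|\bigl(|L(y)|-1\bigr)\le k\sum_{|y|=j}\bigl(|L(y)|-1\bigr)=k\bigl(f(j+1)-f(j)\bigr)$, that is $\Sigma_{\ell-1}\le f(\ell)+k\bigl(f(\ell)-f(\ell-1)\bigr)$. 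Substituting this and telescoping the two resulting sums, and using $f(0)=1$, $f(1)=k$, $f(n-1)=f(n)=n$, gives $N\le\bigl(f(1)-f(n)\bigr)+k\bigl(f(n-1)-f(0)\bigr)=(k-n)+k(n-1)=n(k-1)$; hence $|\mina_{[w]}|=(|A|-k)+N\le|A|+(n-1)|A(w)|-n$. The final assertion of the lemma is the case $|A(w)|=|A|$, where the right-hand side equals $n(|A|-1)$.
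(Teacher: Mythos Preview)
Your proof is correct, but it takes a markedly different route from the paper's.

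The paper's argument is direct and does not pass through primitivity. For the lower bound, it simply observes that for every letter $a\in A$, except at most one, some power $a^{n_a}$ lies in $\mina_{[w]}$ (either $a\notin A(w)$ so $a$ itself is forbidden, or the maximal run of $a$'s on the circle has finite length). For the upper bound, it counts minimal forbidden factors $ub$ of length $\ge 2$ by the pair (ending position of $u$ in $ww$, letter $b$): antifactoriality forces at most one such word per pair, $b$ must differ from the actual next letter, and positions in the second copy of $w$ already cover everything, giving at most $n(|A(w)|-1)$ such words.

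Your approach is more analytic. You reduce to primitive $[w]$, establish the nontrivial fact $f(n-1)=n$ via a cyclic-period argument, set up the identity $N=\sum_{x}\bigl(|L(x^{-})|-|L(x)|\bigr)$, and then telescope: along the prefixes of a single length-$(n-1)$ factor for the lower bound, and via the bilinear quantities $\Sigma_j=\sum_{|y|=j}|R(y)|\,|L(y)|$ together with $|R(y)|\le k$ for the upper bound. This machinery is heavier---the reduction to primitivity and the computation of $f(n-1)$ are extra work---but it has the virtue of expressing everything through the factor complexity function $f$, which makes the role of the constraints $f(0)=1$, $f(1)=k$, $f(n-1)=f(n)=n$ transparent. (In fact your upper-bound computation only uses $f(n-1)\le n$, which follows from monotonicity of $f$ and $f(n)=n$.) The paper's proof is shorter and more combinatorial; yours would generalize more readily to settings where one has control over $f$ but not over explicit occurrences.
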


\begin{proof}
 The inequality $|A|-1 \leq |\mina_{[w]}|$ follows from the fact that for each letter $a\in A$, except at most one, there exists an integer $n_a>0$ such that $a^{n_a}\in \mina_{[w]}$. For the upper bound, we first observe that the minimal forbidden factors of length $1$ of $[w]$ are precisely the elements of $A\setminus A(w)$. We now count the minimal forbidden factors of length greater than $1$. By Lemma~\ref{lem:m}, we know that $\mina_{[w]}=\mina_{ww}\cap A^{\leq |w|}$. Let $ww=w_1w_2\cdots w_{2n}$. Consider a position $i$ in $ww$ such that $n\leq i<2n$. We claim that there are at most $|A|$ distinct elements of $\mina_{[w]}$ of length greater than $1$  whose longest proper prefixes have an occurrence ending in position $i$. Indeed, by contradiction, let $b\in A$ such that there exist $ub,vb\in \mina_{[w]}$ and both $u$ and $v$ occur in $ww$ ending in position $i$. This implies that $ub$ and $vb$ are one suffix of another, against the minimality of the minimal forbidden factors. Since the letter $b$ must be different from the letter of $ww$ occurring in position $i+1$, we therefore have that the number of minimal forbidden factors obtained for $i$ ranging from $n$ to $2n-1$ is at most $n(|A(w)|-1)$. For $i$ such that $1\leq i<n$ (resp.~$i=2n$), if an element $ub\in \mina_{[w]}$, $b\in A$, is such that $u$ has an occurrence in $ww$ ending in position $i$, then $u$ has also an occurrence ending in position $i+n$ (resp.~$n$), so it has already been counted.  Hence,
 $$|\mina_{[w]}| \leq |A|-|A(w)|+n(|A(w)|-1)=|A|+(n-1)|A(w)|-n.$$
\end{proof}

The bounds in the previous lemma are tight. For the lower bound, we have for example that  the set of minimal forbidden factors of the circular word $[w]=[a^n]$ is $A\setminus \{a\}$; for the upper bound, the word $[w]=[a^{n-1}b]$ over the alphabet $A=\{a,b\}$ has $n$ distinct minimal forbidden factors, namely $a^{n}$ and $ba^ib$ for every $i=0,1,\ldots,n-2$. As another example, if $[w]$ is a binary de Bruijn word of order $k$ (and hence length $2^k$), then for every binary word $v$ of length $k$ there exists exactly one letter $a$ such that $va$ is a minimal forbidden factor of $[w]$, hence $|\mina_{[w]}| = |w|$. For instance, if $[w]=[aaababbb]$, then one has $\mina_{[w]}=\{aaaa, aabb, abaa, abba, baab, baba, bbab, bbbb\}$.
Over the alphabet $A=\{a,b,c\}$, for any odd $n$, the circular word $[w]=[a^{\lfloor{n/2\rfloor}}ba^{\lfloor{n/2\rfloor}-1}c]$, of length $n$, has $2n$ distinct minimal forbidden factors. Moreover, over the alphabet $A=\{a_1,a_2,\ldots,a_n\}$, the circular word $[w]=[a_1a_2\cdots a_n]$ has $n(n-1)$ distinct minimal forbidden factors.

We now give a result analogous to Theorem \ref{thm:min} in the case of circular words.

\begin{theorem}\label{thm:min2}
 If $M$ is the set of the minimal forbidden factors of a primitive circular word $[w]$, then the automaton output from algorithm  {\sc L-automaton} on the input trie $\mathcal{T}$ recognizing $M$, after removing sink states, is the minimal automaton recognizing the language $\F_{w^*}$ of factors of $[w]$. 
\end{theorem}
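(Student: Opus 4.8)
The plan is to combine the known correctness of \textsc{L-automaton} with a Myhill--Nerode argument, exactly as one would prove Theorem~\ref{thm:min} but with $\F_{w^*}$ in place of $\F_w$. Running \textsc{L-automaton} on the trie $\mathcal T$ of $M=\mina_{[w]}$ outputs a DFA recognising $L(M)$, which by \eqref{maw} equals $\F_{w^*}$; its states are those of $\mathcal T$, i.e.\ the prefixes of words of $M$, and its sink states are exactly the words of $M$ (no word of $M$ is a proper prefix of another, since $M$ is antifactorial). Hence, after removing sinks, the state set is the set $Q'$ of \emph{proper} prefixes of words of $M$. Each $z\in Q'$ is a proper factor of a word of $M$, so $z\in\F_{w^*}$ and no prefix of $z$ is a sink; thus reading $z$ from the initial state leads to the state $z$. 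Consequently the trimmed automaton $\mathcal B$ is trim, recognises $\F_{w^*}$, and the right language of state $z$ is $R(z):=\{t:zt\in\F_{w^*}\}$. Since a trim DFA recognising $\F_{w^*}$ is the minimal automaton of $\F_{w^*}$ precisely when distinct states have distinct right languages, the statement reduces to: for distinct $z,z'\in Q'$ one has $R(z)\neq R(z')$.

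To attack this I would translate residuals into the periodic structure of $[w]$. Set $n=|w|$ and let $\omega=\cdots www\cdots$ be the bi-infinite periodic word, so $\F_{w^*}$ is its set of finite factors. For $x\in\F_{w^*}$ let $P(x)\subseteq\mathbb{Z}/n\mathbb{Z}$ be the set of residues of the positions of $\omega$ immediately following an occurrence of $x$. Then $R(x)$ is the set of factors of $\omega$ starting at a position whose residue lies in $P(x)$, so $R(x)$ is determined by $P(x)$; and, \emph{using that $[w]$ is primitive} (so the $n$ length-$n$ factors of $\omega$, i.e.\ the rotations of $w$, are pairwise distinct), $P(x)$ is recovered from $R(x)$ as the set of residues $j$ for which the rotation starting at residue $j$ belongs to $R(x)$. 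Hence $R(z)=R(z')\iff P(z)=P(z')$; and when this holds, picking $j$ in the common set shows that $z$ and $z'$ both end in $\omega$ at a position of residue $j-1$, so one of $z,z'$ is a suffix of the other. This suffix-comparability is exactly where primitivity is needed, and it is the first delicate point.

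The combinatorial core, which I expect to be the main obstacle, is then a short contradiction argument. Suppose $R(z)=R(z')$ with $z\neq z'$; the case $z=\varepsilon$ or $z'=\varepsilon$ is immediate, and by the previous paragraph I may assume $z$ is a proper suffix of $z'$, say $z'=z''z$ with $z''\neq\varepsilon$. I would choose $v'\in M$ having $z'$ as a proper prefix and write $v'=z'uc$ with $c\in A$; minimality of $v'$ (and $z'\neq\varepsilon$) gives $z'u\in\F_{w^*}$, $uc\in\F_{w^*}$, and $v'\notin\F_{w^*}$. Transporting this through $R(z)=R(z')$ yields $zu\in\F_{w^*}$ but $zuc\notin\F_{w^*}$, so since $\F_{w^*}$ is factorial, $zuc$ contains some $g\in M$, which must be a suffix of $zuc$ (otherwise $g$ is a factor of $zu\in\F_{w^*}$) and cannot be a suffix of $uc\in\F_{w^*}$; hence $g=z_1uc$ for a nonempty suffix $z_1$ of $z$, hence of $z'$. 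Then $g=z_1uc$ is a suffix of $v'=z'uc$ and, because $z''\neq\varepsilon$, strictly shorter than $v'$, i.e.\ a proper factor of $v'\in M$, contradicting antifactoriality. This gives $R(z)\neq R(z')$ and hence minimality of $\mathcal B$. The crux is precisely this production of a shorter forbidden factor $g$ \emph{inside} a witnessing minimal forbidden factor $v'$; once the phase translation of the previous paragraph is in place, everything else is routine bookkeeping.
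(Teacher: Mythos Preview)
Your proof is correct and uses the same two ingredients as the paper---primitivity of $w$ and antifactoriality of $M$---but organises them differently. The paper fixes \emph{leftmost} occurrences of $v_p,v_q$ in $ww$ and splits into two cases: if these occurrences end at the same position $i$, then $v_p,v_q$ are suffix-comparable and one derives a contradiction with antifactoriality; if they end at different positions $i\neq j$, then any common right extension of length $|w|$ forces two distinct rotations of $w$ to coincide, contradicting primitivity. You instead front-load primitivity: working in the bi-infinite word you show $R(z)=R(z')\Rightarrow P(z)=P(z')$, which immediately yields suffix-comparability, and then you carry out a single antifactoriality argument. Your suffix-case argument is also more carefully worded than the paper's (the paper states that both $v_pz$ and $v_qz$ lie in $\mina_{[w]}$, whereas in fact only the longer one does; the contradiction really comes from the shorter one being a proper suffix of a minimal forbidden factor, exactly as you spell out with your $g=z_1uc$). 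So: same strategy, slightly different decomposition; your bi-infinite setup and the $P$-sets give a tidier route to suffix-comparability than the paper's case split on leftmost occurrences in $ww$.
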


\begin{proof}
Let $\mathcal{A}=(Q,A,i,Q\setminus T,\delta)$ be the automaton output by algorithm {\sc L-automaton} with input the trie $\mathcal{T}$ recognizing the set of the minimal forbidden factors of a circular word $[w]$. Let $w=w_1w_2\cdots w_n$ be a linearization of $[w]$.  By the property of algorithm {\sc L-automaton}, the automaton $\mathcal{A}$ recognizes the language $\F_{w^*}$, since its input is the trie that recognizes the language $\mina_{[w]}=\mina_{\F_{w^*}}$. To prove that $\mathcal{A}$ is minimal, we have to prove that any two states are distinguishable. Suppose by contradiction that there are two nondistinguishable states $p,q\in Q$. By construction, $p$ and $q$ are respectively associated with two proper prefixes, $v_p$ and $v_q$, of words in $\mina_{\F_{w^*}}$, which, by Lemma \ref{lem:circ}, is equal to $\mina_{ww}\cap A^{\leq |w|}$. Therefore, $v_p$ and $v_q$ are factors of $w^*$ of length $\leq |w|$. Hence, they are both factors of $w^2$. Let us then write $w^2=xv_py=x'v_qy'$, with $x$ and $x'$ of minimal length.

Suppose first that there exists $i$ such that $xv_p$ and $x'v_q$ both end in $w_1w_2\cdots w_i$. Then $v_p$ and $v_q$ are one suffix of another. Since $p$ and $q$ are nondistinguishable, there exists a word $z$ such that  $v_pz$ and $v_qz$ end in a sink state, that is, are elements of $\mina_{[w]}$. This is a contradiction since $\mina_{[w]}$ is an antifactorial set and $v_pz$ and $v_qz$ are one suffix of another.

Suppose now that $xv_p$ ends in $w_1w_2\cdots w_i$ and $x'v_q$ ends in $w_1w_2\cdots w_j$ for $i\neq j$. Since $p$ and $q$ are nondistinguishable, for any word $u$ one has that that $v_pu\in \F_{w^*}$ if and only if $v_qu\in \F_{w^*}$. Since $\F_{w^*}$ is a factorial language, we therefore have that there exists a word $z$ of length $|w|$ such that $v_pz$ and $v_qz$ are both in $\F_{w^*}$. But this implies that $z=w_{i+1}w_{i+2}\cdots w_i=w_{j+1}w_{j+2}\cdots w_j$, and this leads to a contradiction since $w$ is primitive and therefore all its rotations are distinct.
\end{proof}

If one is interested in retrieving the circular word $[w]$ from the minimal automaton recognizing  the language $\F_{w^*}$, this can be done with a simple Depth-First-Search procedure in linear time and space with respect to the size of the automaton. Indeed,  the circular word $[w]$ corresponds to a cycle in the (multi-)graph of the automaton, and it can be proved that the size of the minimal automaton  recognizing  the language $\F_{w^*}$ is linear in the length of $w$. This follows from the fact that the size of the output of {\sc L-automaton} is, by construction, the same as the size of its input (except for the sink states), and we have that the size of the trie recognizing the minimal forbidden factors of $[w]$ is linear in the size of $w$ --- as this is a subtrie of  the trie recognizing the minimal forbidden factors of $ww$ (Lemma \ref{lem:m}), and this latter has a size that is linear in the length of $w$, as observed in the previous section.

However, it is possible to give more precise bounds on the size of the minimal automaton  recognizing  the language $\F_{w^*}$ in terms of the length of $w$, as shown below.

\begin{theorem}\label{thm:states}
Let $w$ be a word of length $n$. The minimal automaton  recognizing  the language $\F_{w^*}$ has at most $2n-1$ states.
\end{theorem}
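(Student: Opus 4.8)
The plan is to bound the number of states of the minimal automaton recognizing $\F_{w^*}$ by comparing it to the factor automaton of the linear word $ww$, whose number of states is already known to be at most $2(2n)-2 = 4n-4$; this crude bound is too weak, so instead I would work directly with the Myhill--Nerode classes of $\F_{w^*}$ and exhibit an injection (or near-injection) from the nontrivial classes into a set of size roughly $2n$. Recall from Theorem~\ref{thm:min2} that the states of the minimal automaton correspond bijectively to the proper prefixes of words in $\mina_{[w]}$, equivalently (after merging Myhill--Nerode equivalent ones) to the distinct right-equivalence classes $[v]$ of factors $v$ of $w^*$, where $v\sim v'$ iff $v^{-1}\F_{w^*} = v'^{-1}\F_{w^*}$. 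So it suffices to count these classes.

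First I would observe that, by Lemma~\ref{lem:circ} and the argument inside the proof of Theorem~\ref{thm:min2}, every factor $v$ of $w^*$ is a factor of $ww$, and the right-equivalence class of $v$ is determined by the set of positions in $ww$ (equivalently, positions modulo $n$ on the circle) at which an occurrence of $v$ ends. Since $w$ is primitive, a factor $v$ with $|v|\ge n$ occurs at a unique position on the circle, hence all ``long'' factors of a given ending position collapse into a single state together with the empty-continuation behaviour; more precisely, the words of length $\ge n$ ending at a fixed cyclic position are all right-equivalent to one another (each can only be extended to the unique rotation of $w$), giving at most $n$ such classes, one per cyclic position. Then I would count the classes of ``short'' factors, those of length $< n$: here I would adapt the classical argument for the linear factor automaton (à la Blumer et al.), using that each such class has a well-defined longest member $v$, and charging the class to the pair consisting of the position where $v$ begins and a bit recording whether the class is a ``branching'' one (i.e.\ whether $v$'s occurrence set splits on prepending a letter). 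The bookkeeping should yield at most $n-1$ short classes beyond the initial state, for a total of at most $(n) + (n-1) = 2n-1$.

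The main obstacle I expect is getting the constant exactly right rather than an off-by-a-small-constant bound like $2n$ or $2n+1$. The delicate point is handling the ``wrap-around'' factors — those of length close to $n$ that straddle the junction of the two copies of $w$ in $ww$ — and making sure they are not double-counted against both a short-factor class and a long-factor class; the primitivity of $w$ is exactly what prevents two distinct cyclic ending positions from carrying the same long factor, and I would lean on that to separate the two regimes cleanly. I would also need to verify that the initial state and the (unique) class of $\varepsilon$ together with the cyclic-position classes do not overlap, and that at least one cyclic position fails to contribute a ``new'' state (because the full cycle closes up), which is what shaves the bound from $2n$ down to $2n-1$. A clean way to run the counting is to build the minimal automaton as a quotient of the suffix automaton of $ww$ restricted to states reachable by words of length $\le n$, and track how the suffix-link tree of that suffix automaton, which has at most $2\cdot 2n - 1$ nodes, collapses under the circular identification; isolating the contribution of the $\le n$-length cone and the primitive-rotation identification is where the factor of two is recovered.
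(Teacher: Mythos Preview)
Your proposal is a plan with explicitly acknowledged gaps rather than a proof, and the gap is real: you never identify the structural property that makes the counting go through. You do isolate the right invariant --- the set of cyclic ending positions of a factor, i.e.\ a subset of $\{1,\ldots,n\}$ --- and you correctly note that this set determines (in fact refines) the Myhill--Nerode class. But from there you launch into a long/short case split and a vague ``charging'' argument, and you yourself flag that getting $2n-1$ rather than $2n$ or $2n+1$ is where you expect trouble. The paper avoids all of this with one observation you are missing: for any two factors $x,y\in\F_{w^*}$, the end-position sets $E_w(x)$ and $E_w(y)$ are either disjoint or nested (one is contained in the other). In other words, the family $\{E_w(x):x\in\F_{w^*}\}$ is a laminar family of nonempty subsets of $\{1,\ldots,n\}$, and such a family has at most $2n-1$ members. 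That is the entire argument; no case split on length, no wrap-around bookkeeping, no suffix-automaton machinery. This is exactly the Blumer et al.\ mechanism you allude to for the linear factor automaton, transplanted to the circle by indexing positions modulo $n$.

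Two smaller points. First, you invoke primitivity (``Since $w$ is primitive, a factor $v$ with $|v|\ge n$ occurs at a unique position''), but the theorem as stated does not assume it; you would need to add the one-line reduction $\F_{w^*}=\F_{v^*}$ for $w=v^k$ with $v$ primitive. The paper's laminar argument does not need primitivity at all. Second, your long/short split \emph{can} be made rigorous, but only after you have laminarity: the ``long'' classes are exactly those whose end-position set is a singleton (at most $n$ of them), and the ``short'' classes have end-position sets of size $\ge 2$, of which a laminar family on an $n$-set has at most $n-1$. So even your intended route collapses back to the same lemma.
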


\begin{proof}
Let $w=a_1a_2\cdots a_n$. Let $\Su_{w^*}$ be the set of suffixes of words in the language $w^*$. For every $x\in \F_{w^*}$ different from the empty word, let $E_w(x)=\{i\in\{1,\ldots,n\}\mid x\in \Su_{w^*}a_1a_2\cdots a_i\}$, and let $E_w(\epsilon)=\{1,2,\ldots,n\}$. We define the equivalence $x\equiv_w y$ if and only if $E_w(x)=E_w(y)$. This  equivalence clearly being right-invariant ($x\equiv_w y$ implies $xa\equiv_w ya$ for every letter $a$), there exists a DFA $\mathcal{A}$ recognizing $\F_{w^*}$ whose states are identified with the equivalence classes of $\equiv_w$. It is readily verified that for every $x,y\in \F_{w^*}$, either $E_w(x)$ and $E_w(y)$ are disjoint or one is contained in the other. They therefore form a non-overlapping family of nonempty subsets of $\{1,2,\ldots n\}$, which implies that there are at most $2n-1$ of them (see for example Lemma 1 in \cite{Bl83}, where the value $2n-1$ is replaced by $2n$ because the authors consider the equivalence in $\Sigma^*$ rather than in $\F_{w^*}$ thus allowing the empty set as an extra class corresponding to the additional sink state). This shows that there exists an automaton recognizing $\F_{w^*}$ with at most $2n-1$ states.
\end{proof}

The bound in the previous theorem is tight for $n>2$. As an example, consider $w=ab^{n-1}$. The states of the minimal automaton recognizing  the language $\F_{w^*}$ are identified with the classes of the equivalence relation defined on $\F_{w^*}$ by: $x\equiv_M y$ if and only if for every word $z$ one has $xz\in \F_{w^*} \Leftrightarrow yz\in \F_{w^*} $. Now, it is easy to see that the factors $b^i$ and $ab^i$, $0\leq i<n-1$, are each in a distinct class, while $b^{n-1}$ and $ab^{n-1}$ are together in another class, whence there are at least $2n-1$ states in the minimal automaton recognizing $\F_{w^*}$.

In view of these considerations, we define \emph{the factor automaton of a circular word} $[w]$ as the minimal automaton  recognizing  the language $\F_{w^*}$.

\section{Circular Fibonacci Words and Minimal Forbidden Factors}

In this section, we illustrate the combinatorial  results discussed in the previous section in the special case of the circular  Fibonacci words. The Fibonacci words are a paradigmatic example that often represents the limit case for some property. For example, it is well known the worst-case running time of some pattern matching algorithms is realized by the Fibonacci words (see, e.g., \cite{KMP}). 

We fix the alphabet $A=\{a,b\}$. The sequence $(f_n)_{n\geq 1}$ of Fibonacci words is defined recursively by: $f_1=b$, $f_2=a$ and $f_n=f_{n-1}f_{n-2}$ for $n>2$. The length of the word $f_n$ is the Fibonacci number $F_n$. The limit of this sequence is the infinite Fibonacci word $f=\lim_{n\to \infty}f_n=abaababaabaab\cdots$.

\begin{table}[tb]
\centering  
\begin{minipage}{.5\linewidth}
\begin{equation*}
\begin{split}
  f_{1}  &= b \\
  f_{2}  &= a \\
  f_{3}  &= ab \\
  f_{4}  &= aba \\
  f_{5}  &= abaab \\
  f_{6}  &= abaababa\\
  f_{7}  &= abaababaabaab\\
  f_{8}  &= abaababaabaababaababa \\
 \end{split}
 \end{equation*}
 \end{minipage}
 \begin{minipage}{.4\linewidth}
\begin{equation*}
\begin{split}
   \\
   \\
  u_{3}  &= \epsilon \\
  u_{4}  &= a \\
  u_{5}  &= aba \\
  u_{6}  &= abaaba\\
  u_{7}  &= abaababaaba\\
  u_{8}  &= abaababaabaababaaba \\
 \end{split}
 \end{equation*}
 \end{minipage}
 \vspace{4mm}
\caption{\label{tab:Fibowords}The first few Fibonacci words $f_n$ and the first few words $u_n$.}
\end{table}

Every Fibonacci word $f_n$ has a factor automaton with $|f_n|+1$ states (thus attaining the lower bound on the number of states that the factor automaton of a word  can have) and the structure of the factor automaton of the Fibonacci words allows one to derive several combinatorial properties of these words (cf.~\cite{Rytter,Fi11}).

We will now describe the structure of the sets $\mina_{[f_n]}$ of minimal forbidden factors of the circular Fibonacci words $[f_n]$. 

Let us recall some well-known properties of the Fibonacci words (the reader may also see \cite[Chap.~2]{Lot01}). For every $n\geq 3$, one can write $f_n=u_nab$ if $n$ is odd or $f_n=u_nba$ if $n$ is even, where $u_n$ is a palindrome. Moreover, since $f_n=f_{n-1}f_{n-2}$ and the words $u_n$ are palindromes, one has that for every $n\geq 5$ 

\begin{equation}\label{eq:central}
 f_n=u_nxy=u_{n-1}yxu_{n-2}xy=u_{n-2}xyu_{n-1}xy
\end{equation}
for letters $x,y$ such that $\{x,y\}=\{a,b\}$. Indeed, since $u_n$ is a palindrome, one has that $u_n=u_{n-1}yxu_{n-2}$  is equal to its mirror image, $\wt{u}_{n-2}xy\wt{u}_{n-1}$, which is equal to $u_{n-2}xyu_{n-1}$ since $u_{n-1}$ and $u_{n-2}$ are palindromes.
The first few Fibonacci words $f_n$ and the first few words $u_n$ are shown in Table~\ref{tab:Fibowords}.

The words $f_n$ (as well as the words $f_nf_n$) are \emph{balanced}, that is, for every pair of factors $u$ and $v$ of the same length, one has $||u|_a-|v_a||\leq 1$ (and therefore also $||u|_b-|v_b||\leq 1$).

A \emph{bispecial factor} of a word $w$ over the alphabet $A=\{a,b\}$ is a word $v$ such that $av,bv,va,vb$ are all factors of $w$. 

\begin{proposition}\label{prop:fnfn}
For every $n\geq 3$, the set of bispecial factors of the word $f_nf_n$ is $\{u_3,u_4,\ldots,u_{n}\}$. 
\end{proposition}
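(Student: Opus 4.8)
The plan is to characterize the bispecial factors of $f_nf_n$ by reducing to known facts about the infinite Fibonacci word $f$ together with a careful analysis of what circularity adds. First I would recall the well-known fact that the bispecial factors of the infinite Fibonacci word $f$ are exactly the palindromes $u_3, u_4, u_5, \ldots$ (equivalently, the reversals of the prefixes $f_n$ with the last two letters removed), and that $u_k$ is a factor of $f_m$ for all sufficiently large $m$. Since every factor of $f_nf_n$ of length at most $|f_n|$ is a factor of $f$ (the Fibonacci words being factors of $f$, and $f_nf_n$ agreeing with a sufficiently long prefix of $f$ on its first $|f_n|+$few positions — more precisely $f_nf_n$ is a factor of $f$ when read appropriately, since $f_nf_{n-1}=f_{n+1}$ is a prefix of $f$ and $f_nf_n$ differs from $f_{n+1}$ only in a short suffix; one should instead argue directly that $f_nf_n$ is a factor of $f$ because $f$ contains arbitrarily long squares of this form, or cite balancedness/the structure of return words), the short bispecial factors of $f_nf_n$ can only come from this list. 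So one inclusion amounts to showing each $u_k$ with $3\le k\le n$ is bispecial in $f_nf_n$, and the other to showing nothing longer (and nothing off the list) is bispecial.

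For the inclusion $\{u_3,\dots,u_n\}\subseteq$ bispecials of $f_nf_n$: I would use the identities in \eqref{eq:central}. Since $u_n$ is a prefix of $f_n$ and $f_n=u_nxy$, and also (reading across the junction) $f_nf_n$ contains $u_n$ extended on the right by both letters — indeed $u_n x$ occurs (as a prefix of $f_n$) and $u_n y$ occurs because $\wt{u}_n = u_n$ and $u_n$ appears as a suffix of $f_n$ preceded appropriately; more cleanly, for $k<n$ the word $u_k$ is a proper factor of $f_n$ itself and is bispecial already in $f_n$ by the standard theory of Sturmian words, so it is certainly bispecial in $f_nf_n$. The only genuinely new case is $k=n$: here $u_n$ is the longest proper prefix common to both "halves," and I must check that in $f_nf_n$ the occurrence of $u_n$ as a prefix is followed by $x$, the occurrence of $u_n$ ending at the center (using $u_n$ palindrome, it is a suffix of $f_n$ up to the last two letters... one must be careful) is followed by $y$, and symmetrically on the left using that $u_n$ is a factor of $f_nf_n$ straddling the junction. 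This is exactly where \eqref{eq:central}, giving $f_n=u_{n-1}yxu_{n-2}xy=u_{n-2}xyu_{n-1}xy$, is used to locate a second occurrence of $u_n$ (or of $u_{n-1}$ followed by the "wrong" letter) across the boundary of $f_nf_n$.

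For the reverse inclusion, suppose $v$ is bispecial in $f_nf_n$. If $|v|\le |f_n|-2$ or so, then $v$ is a factor of $f$, hence a factor of $f_m$ for large $m$, and being right-special in $f_nf_n$ forces $v$ to be right-special in $f$ (since its occurrences and their one-letter extensions inside $f_nf_n$ also occur in $f$), likewise left-special; but the bispecial factors of $f$ are precisely the $u_k$, and $|v|=|u_k|$ forces $k\le n$. If instead $v$ is long, close to $|f_n|$, I would use that no factor of $f_nf_n$ of length $\ge |f_n|-1$ can be extended by two different letters on the right (this is essentially the statement proved inside the proof of Lemma~\ref{lem:circ}: a word of length at least $|w|-1$ determines its continuation in $\F_{w^*}$, and here $\F_{(f_n)^*}$ contains all factors of $f_nf_n$), so no such $v$ is right-special at all. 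Hence $v$ must be among the $u_k$ with $k\le n$, completing the characterization.

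The main obstacle I expect is the $k=n$ boundary case: showing $u_n$ really is bispecial in $f_nf_n$ (and not merely in $f$, where it certainly is — but one must confirm both required extensions survive inside the finite square $f_nf_n$, using occurrences of $u_n$ that straddle the junction). Everything else is either standard Sturmian combinatorics (short bispecials are the $u_k$) or a direct consequence of the "length $\ge |w|-1$ is non-special" argument already appearing in the proof of Lemma~\ref{lem:circ}. I would therefore organize the write-up as: (1) recall bispecials of $f$ are the $u_k$; (2) observe every factor of $f_nf_n$ of length $\le |f_n|$ is a factor of $f$; (3) handle long $v$ via non-specialness; (4) do the $k<n$ cases trivially and the $k=n$ case via \eqref{eq:central}.
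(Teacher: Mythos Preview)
Your overall strategy matches the paper's: reduce to the known list of bispecial factors of the infinite Fibonacci word, then verify the boundary case $k=n$ using the identity~\eqref{eq:central}. Two points, however, deserve correction or sharpening.

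First, the paper disposes of your hedging about whether $f_nf_n$ is a factor of $f$ with a single identity: $f_{n+3}=f_{n+2}f_{n+1}=f_{n+1}f_nf_nf_{n-1}$, so $f_nf_n$ sits inside the prefix $f_{n+3}$ of $f$. This makes your ``long $v$'' case split unnecessary: once $f_nf_n$ is a factor of $f$, \emph{every} bispecial factor of $f_nf_n$ (of any length) is a bispecial factor of $f$, hence lies in $\{u_k:k\ge 3\}$.

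Second, and more seriously, your claim that ``for $k<n$ the word $u_k$ is bispecial already in $f_n$ by the standard theory of Sturmian words'' is false in general. For instance, $u_6=abaaba$ is not left-special in $f_7=abaababaabaab$: the only left extension of $u_6$ occurring there is $babaaba$, not $aabaaba$. So you cannot dismiss the $k<n$ cases as trivial consequences of the finite-word theory. The paper handles this differently and more elegantly: since $u_{m-1}$ is both a prefix and a suffix of $u_m$, bispeciality of $u_n$ in $f_nf_n$ immediately cascades down to give bispeciality of $u_{n-1},u_{n-2},\ldots,u_3$. Thus the only case requiring actual work is $k=n$, and for that the paper writes out explicitly (using~\eqref{eq:central}) the factorization
\[
f_nf_n = u_{n-1}y\cdot xu_ny\cdot xu_{n-2}xy,
\]
which exhibits the missing extensions $xu_n$ and $u_ny$ (the extensions $u_nx$ and $yu_n$ being immediate from $f_nf_n=u_nxyu_nxy$). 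Your description of this step is in the right spirit but stops short of the concrete computation.
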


\begin{proof}
As it is well known (cf.~\cite[Proposition~10]{DM94}), the bispecial factors of the infinite Fibonacci word $f$ are the central words $u_n$, $n\geq 3$. Since $f_nf_n$ is a factor of $f$ (the prefix $f_{n+3}$ of $f$ can be written as $f_{n+3}=f_{n+2}f_{n+1}=f_{n+1}f_nf_nf_{n-1}$) we have that the set of bispecial factors of the word $f_nf_n$ is contained in $\bigcup_{n\geq 3}u_n$. Since by construction $u_{n-1}$ appears in $u_n$ both as a prefix and as a suffix, we are left to prove that $u_n$ is a bispecial factor of $f_nf_n$, since this will imply that also the words $u_m$, $3\leq m\leq n$, are bispecial factors of $f_nf_n$.

The claim can be easily checked for $n=3,4$. Let us then suppose $n\geq 5$. We know that $f_nf_n=u_nxyu_nxy$ for letters $x,y$ such that $\{x,y\}=\{a,b\}$. Hence, $u_nx$ and $yu_n$ appear as factors in $f_nf_n$. We will now show that also $xu_n$ and $u_ny$ appear as factors in $f_nf_n$. Indeed, using (\ref{eq:central}), we have
 \begin{align*}\label{eq:proof}
 f_nf_n&=u_nxy \cdot u_nxy\\
 &=u_{n-1}yxu_{n-2} \cdot xy \cdot  u_{n-1}yxu_{n-2}\cdot xy\\
 &=u_{n-1}y \cdot xu_{n-2} xy  u_{n-1}y \cdot xu_{n-2}xy\\
 &= u_{n-1}y \cdot xu_ny \cdot xu_{n-2}xy
\end{align*}
that gives the desired occurrences of $xu_n$ and $u_ny$.
\end{proof}

Let us now define the sequence of words $(\hat{f}_n)_{n\geq 3}$ by $\hat{f}_n=au_na$ if $n$ is odd,  $\hat{f}_n=bu_nb$ if $n$ is even. These words are known as \emph{singular words}. Analogously, we can define the sequence of words $(\hat{g}_n)_{\geq 3}$ by $\hat{g}_n=bu_nb$ if $n$ is odd,  $\hat{g}_n=au_na$ if $n$ is even. For every $n$, the word $\hat{g}_n$ is obtained from the word $\hat{f}_n$ by changing the first and the last letter. The elements of the sequence $\hat{g}_n$ are indeed the minimal forbidden factors of the infinite Fibonacci word $f$(see~\cite{Mignosi02}).
The first few values of the sequences $\hat{f}_n$ and $\hat{g}_n$ are shown in Table \ref{tab:mff}.

\begin{table}[tb]
\centering  
\begin{minipage}{.4\linewidth}
\begin{equation*}
\begin{split}
  \hat{f}_{3}  &= aa \\
  \hat{f}_{4}  &= bab \\
  \hat{f}_{5}  &= aabaa \\
  \hat{f}_{6}  &= babaabab\\
  \hat{f}_{7}  &= aabaababaabaa\\
  \hat{f}_{8}  &= babaababaabaababaabab \\
 \end{split}
 \end{equation*}
 \end{minipage}
 \begin{minipage}{.4\linewidth}
 \begin{equation*}
\begin{split}
  \hat{g}_{3}  &= bb \\
  \hat{g}_{4}  &= aaa \\
  \hat{g}_{5}  &= babab \\
  \hat{g}_{6}  &= aabaabaa\\
  \hat{g}_{7}  &= babaababaabab\\
  \hat{g}_{8}  &= aabaababaabaababaabaa \\
 \end{split}
 \end{equation*}
  \end{minipage}
 \vspace{4mm}
\caption{\label{tab:mff}The first few elements of the sequences $\hat{f}_n$ and $\hat{g}_n$.}
\end{table}

The structure of the sets of minimal forbidden factors of circular Fibonacci words can be described in terms of the words $\hat{f}_{n}$ and $\hat{g}_{n}$, as follows.
The first few sets $\mina_{[f_n]}$ are displayed in Table~\ref{tab:Fibmff}. We have $\mina_{[f_1]}=\mina_{[b]}=\{a\}$, $\mina_{[f_2]}=\mina_{[a]}=\{b\}$ and $\mina_{[f_3]}=\mina_{[ab]}=\{aa,bb\}$. The following theorem gives a characterization of the sets $\mina_{[f_n]}$ for $n\geq 4$.

\begin{table}[tb]
\centering  
\begin{equation*}
\begin{split}
  \mina_{[f_1]}  = & \{a\} \\
  \mina_{[f_2]}  = & \{b\} \\
  \mina_{[f_3]}  = & \{aa,bb\} \\
  \mina_{[f_4]}  = & \{bb,aaa,bab\} \\
  \mina_{[f_5]}  = & \{bb, aaa, aabaa, babab\} \\
  \mina_{[f_6]}  = & \{bb, aaa, babab, aabaabaa, babaabab\} \\
  \mina_{[f_7]}  = & \{bb, aaa, babab, aabaabaa, aabaababaabaa, babaababaabab\} \\
  \mina_{[f_8]}  = & \{bb, aaa, babab, aabaabaa, babaababaabab, \\
  & \ aabaababaabaababaabaa, babaababaabaababaabab\} \\
 \end{split}
 \end{equation*}
 \vspace{4mm}
\caption{\label{tab:Fibmff}The first few sets of minimal forbidden factors of the circular Fibonacci words.}
\end{table}

\begin{theorem}\label{thm:fib}
 For every $n\geq 4$, $\mina_{[f_n]}=\{\hat{g}_{3},\hat{g}_{4},\ldots,\hat{g}_{n},\hat{f}_{n}\}$.
\end{theorem}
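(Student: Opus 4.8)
The plan is to use Lemma~\ref{lem:m}, which tells us that $\mina_{[f_n]}=\mina_{f_nf_n}\cap A^{\leq F_n}$, so that computing the minimal forbidden factors of the circular word reduces to computing those minimal forbidden factors of the linear word $f_nf_n$ whose length does not exceed $F_n$. First I would recall the standard fact that for a binary word, every minimal forbidden factor is of the form $xvy$ where $v$ is a bispecial factor and $\{x,y\}=\{a,b\}$; more precisely, if $v$ is a right-special factor that can be followed by both letters and $va\in\F$ but $vb\notin\F$ while $ub\in\F$ for a letter $u$, one produces exactly one minimal forbidden factor for each ``missing'' extension. Combined with Proposition~\ref{prop:fnfn}, which identifies the bispecial factors of $f_nf_n$ as exactly $u_3,u_4,\ldots,u_n$, this says that every element of $\mina_{f_nf_n}$ of length at least $3$ has the form $xu_ky$ for some $3\leq k\leq n$ and $\{x,y\}=\{a,b\}$, plus we must also handle the short minimal forbidden factors $aa$ and $bb$ separately (only $bb$ survives for $n\geq 4$, since $aa$ occurs in $f_nf_n$ for $n\geq 4$).

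Next I would determine, for each central word $u_k$ with $3\leq k\leq n$, which of the four words $au_ka$, $au_kb$, $bu_ka$, $bu_kb$ are factors of $f_nf_n$ and which are not. The key input is equation~\eqref{eq:central} together with the occurrences established in the proof of Proposition~\ref{prop:fnfn}: in $f_nf_n$ one finds $u_kx$, $xu_k$, $yu_k$, $u_ky$ as factors where $xy$ is the ``suffix'' pair of $f_k$ (namely $f_k=u_kxy$). I expect the outcome to be that for each $k$ with $3\leq k<n$, exactly one of the four two-letter extensions of $u_k$ is forbidden in $f_nf_n$, and that forbidden extension is precisely $\hat g_k$ (recall $\hat g_k$ is obtained from the singular word $\hat f_k=$ one of $au_ka,bu_kb$ by flipping both end letters, so $\hat g_k$ is $bu_kb$ for $k$ odd and $au_ka$ for $k$ even); for $k=n$ the central word $u_n$ sits at the ``boundary'' of $f_nf_n=u_nxyu_nxy$, and there two of its extensions are forbidden — one giving $\hat g_n$ and the other giving $\hat f_n$. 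The reason $\hat f_n$ is forbidden for the circular word but $\hat f_k$ is not for $k<n$ is exactly that $u_n$ is the longest bispecial factor, so its left-extension by the ``wrong'' letter is no longer available; this is the arithmetic one has to check using \eqref{eq:central}.

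I would then verify that each candidate $\hat g_k$ (for $3\leq k\leq n$) and $\hat f_n$ is genuinely \emph{minimal} forbidden, i.e.\ that all its proper factors occur in $f_nf_n$: its proper prefix $xu_k$ (or $yu_n$) and proper suffix $u_ky$ (or $u_nx$) are factors by the occurrences just described, and any shorter factor is a factor of $f$ hence of $f_nf_n$ (using that $f_nf_n$ is a factor of $f$, as noted before Proposition~\ref{prop:fnfn}), provided its length is within range — and one checks $|\hat g_k|=F_k+1\leq F_n$ for $k\leq n$ and $|\hat f_n|=F_n+1$, so that $\hat f_n$ just fits the bound $A^{\leq F_n}$ of Lemma~\ref{lem:m} (this length bookkeeping, together with the observation that $\hat g_k$ for $k>n$ would be too long, is what makes the set finite and equal to the claimed list). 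Conversely, no $xu_ky$ other than those listed is forbidden, by the factor computation in the previous step. A clean way to organize the induction is to use Theorem~\ref{thm:fib} for $n-1$: since $f_{n-1}f_{n-1}$ is a factor of $f_nf_n$, every minimal forbidden factor of $f_nf_n$ that is short enough is already forbidden in $f_{n-1}f_{n-1}$ or ``becomes allowed'' — and one checks that $\hat f_{n-1}$ becomes a factor of $f_nf_n$ (it occurs inside the second copy straddling the junction, via \eqref{eq:central}) while $\hat g_3,\ldots,\hat g_{n-1}$ stay forbidden, and the two genuinely new elements are $\hat g_n$ and $\hat f_n$.

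The main obstacle I anticipate is the precise bookkeeping in the second paragraph: pinning down, for each $k$, exactly which single extension of $u_k$ is missing in $f_nf_n$ and matching it to $\hat g_k$ rather than, say, $\hat f_k$; this requires carefully exploiting the two decompositions $f_k=u_{k-1}yxu_{k-2}xy=u_{k-2}xyu_{k-1}xy$ from \eqref{eq:central} to see that $u_k$ is always preceded by the ``correct'' letter $x$ in any occurrence inside $f_nf_n$, and is never preceded by $y$ for $k<n$ but \emph{is} followed by the wrong letter only at the $u_n$-boundary. The parity alternation (odd vs.\ even $k$) in the definitions of $\hat f_k$ and $\hat g_k$ means one must keep careful track of which of $a,b$ plays the role of $x$ and which plays the role of $y$; this is where a routine but error-prone case analysis lives, and it is worth tabulating small cases ($n=4,5,6$) against Table~\ref{tab:Fibmff} as a sanity check before writing the general argument.
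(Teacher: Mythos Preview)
Your approach is essentially the same as the paper's: reduce via Lemma~\ref{lem:m} to $\mina_{f_nf_n}\cap A^{\leq F_n}$, use Proposition~\ref{prop:fnfn} to conclude that every minimal forbidden factor has the form $\alpha u_k\beta$ with $3\le k\le n$, show that the mixed extensions $\alpha u_k\beta$ with $\alpha\neq\beta$ all occur in $f_nf_n$ via the decompositions~\eqref{eq:central}, and finally check that among the same-letter extensions precisely $\hat g_3,\ldots,\hat g_n,\hat f_n$ are forbidden. Two remarks. First, a slip in your length bookkeeping: $|u_k|=F_k-2$, so $|\hat g_k|=|\hat f_k|=F_k$, not $F_k+1$; thus $\hat f_n$ has length exactly $F_n$ and genuinely fits the bound (your phrasing ``$F_n+1$ just fits $A^{\le F_n}$'' is self-contradictory). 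Second, the paper short-circuits the case analysis you flag as the main obstacle: rather than tracking which letter precedes each occurrence of $u_k$, it first shows (easily) that $\hat f_k$ \emph{is} a factor of $f_nf_n$ for $k<n$, and then invokes the balancedness of $f_nf_n$ to conclude that $\hat g_k$ cannot also be a factor, since $\hat f_k$ and $\hat g_k$ have the same length but differ by two in their letter counts. This avoids the parity bookkeeping entirely.
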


\begin{proof}
By Lemma \ref{lem:circ},  $\mina_{[f_n]}= \mina_{f_nf_n}\cap A^{\leq |f_n|}$. Let $xuy$, $u\in A^*$, $x,y\in A$, be in $\mina_{f_nf_n}\cap A^{\leq |f_n|}$. Then, $xu$ has an occurrence in $f_nf_n$ followed by letter $\bar{y}$, the letter different from $y$, and $uy$  has an occurrence in $f_nf_n$ preceded by letter $\bar{x}$, the letter different from $x$. Therefore, $u$ is a bispecial factor of the word $f_nf_n$, hence, by Proposition~\ref{prop:fnfn}, $u\in\{u_3,u_4,\ldots,u_n\}$. Thus, an element in $\mina_{[f_n]}$ is of the form $\alpha u_i\beta$ for some $3\leq i\leq n$ and $\alpha,\beta\in A$.

We first prove that for every $3\leq i\leq n$, if $\alpha\neq \beta$, then the words $\alpha u_i\beta$ and $\beta u_i\alpha$ occur as factors in $f_nf_n$. Let us write $f_nf_n=u_nxyu_nxy$, for letters $x,y$ such that $\{x,y\}=\{a,b\}$. 

As observed in the proof of Proposition~\ref{prop:fnfn}, $f_nf_n$ contains an internal occurrence of $xu_ny=xu_{n-1}yxu_{n-2}y$, hence it contains an occurrence of $xu_{n-1}y$ and an occurrence of $xu_{n-2}y$. On the other hand, $f_nf_n=u_nxyu_nxy=u_nxy u_{n-2}xyu_{n-1} xy$, hence $f_nf_n$ contains an occurrence of $yu_nx$, an occurrence of $yu_{n-1}x$ and an occurrence of $yu_{n-2}x$. Continuing this way, recursively, we see that $f_nf_n$ contains an occurrence of $xu_iy$ and one of $yu_ix$ for every $3\leq i\leq n$.

Thus,  a word in $\mina_{[f_n]}$ can only be of the form $\hat{f}_i$ or $\hat{g}_i$, for  some $3\leq i\leq n$.

\textit{Claim: $\hat{f}_n$ is a minimal forbidden factor of $f_{n}f_{n}$.} 

Proof: Let $\hat{f}_n=xu_nx$, $x\in A$. By Proposition~\ref{prop:fnfn}, we know that $xu_n$ and $u_nx$ are factors of $f_nf_n$. It remains to show that  $\hat{f}_n$ does not occur in $f_nf_n=u_nxyu_nxy$, $y\neq x$. If $xu_nx$ occurs in $f_nf_n$, then it occurs in $u_nxyu_n$. But it is known that the longest repeated prefix of $u_nxyu_n$ is $u_n$ (cf.~\cite{cacca}), so $u_nx$ cannot appear in $u_nxyu_n$.

\textit{Claim: $\hat{f}_n$ is a factor of $f_{n+1}f_{n+1}$.} 

Proof: The first letter of $\hat{f}_n$ is equal to the last letter of $f_{n+1}$ and, by removing the first letter from $\hat{f}_n$,  one obtains a prefix of $f_{n+1}$. Hence, $\hat{f}_n$ is a factor of  $f_{n+1}f_{n+1}$. 

\textit{Claim: For every $3\leq i< n$,  $\hat{g}_i$ is a minimal forbidden factor of $f_nf_n$.} 

Proof: By the previous claim, it follows that for every $3\leq i< n$, the word $\hat{f}_i$ is factor of the word $f_nf_n$. Therefore $\hat{g}_i$ cannot be a factor of $f_nf_n$ otherwise  the word $f_nf_n$ would not be balanced. Since removing the first or the last letter from the word  $\hat{g}_i$ one obtains a factor of the word $f_nf_n$, the claim is proved.

\textit{Claim: $\hat{g}_n$ is a minimal forbidden factor of $f_nf_n$.} 

Proof: $\hat{g}_n=yu_ny$ is a minimal forbidden factor of the infinite Fibonacci word $f$, so it cannot be a factor of $f_nf_n$. On the other hand, we proved in Proposition~\ref{prop:fnfn} that $u_n$ is a bispecial factor of $f_nf_n$, hence $yu_n$ and $u_ny$ occur in $f_nf_n$.
\end{proof}

Notice that, by Lemma~\ref{lem:m}, for any circular word $[w]$, one has that $|w|$ is an upper bound on the length of the minimal forbidden factors of $[w]$. The previous theorem shows that this bound is indeed tight. However, the maximum length of a minimal forbidden factor of a circular word $[w]$ is not always equal to $|w|$. For example, for $w=aabbab$ one has $\mina_{[w]}=\{aaa, bbb, aaba, abab, babb, bbaa\}$.

About the cardinality of the set of minimal forbidden factors, however, Fibonacci words are not extremal. Indeed, by Lemma \ref{lem:nb}, we know that a binary word can  have a number of distinct minimal forbidden factors equal to its length, while in the case of the Fibonacci words this number is only logarithmic in the length of the word, as we have, by Theorem \ref{thm:fib}, that the cardinality of $\mina_{[f_n]}$ is  $n-1$ and the length of $f_n$ is exponential in $n$.

By Theorem~\ref{thm:min2}, if $\mathcal{T}$ is the trie recognizing the set $\{\hat{g}_{3},\hat{g}_{4},\ldots,\hat{g}_{n},\hat{f}_{n}\}$, then algorithm {\sc L-automaton} on the input trie $\mathcal{T}$ builds the minimal deterministic automaton recognizing $\F_{f_n^*}$. 
Since the automaton output by algorithm {\sc L-automaton} has the same set of states of the  input trie $\mathcal{T}$ after removing sink states, and since removing the last letter from each word $\hat{g}_i$ results in a prefix of $\hat{f}_{i+1}$, we have that the factor automaton of the circular Fibonacci word $[f_n]$ (that is, the minimal automaton recognizing $\F_{f_n^*}$) has exactly $2F_n-1$ states (see Figure~\ref{fig:mff} for an example). In view of Theorem \ref{thm:states}, circular Fibonacci  words have a factor automaton with the largest possible number of states, while in the linear case it is well known that Fibonacci words have a factor automaton with the smallest possible number of states. 

\begin{figure}[tb]
\begin{center}\small
\includegraphics[height=55mm]{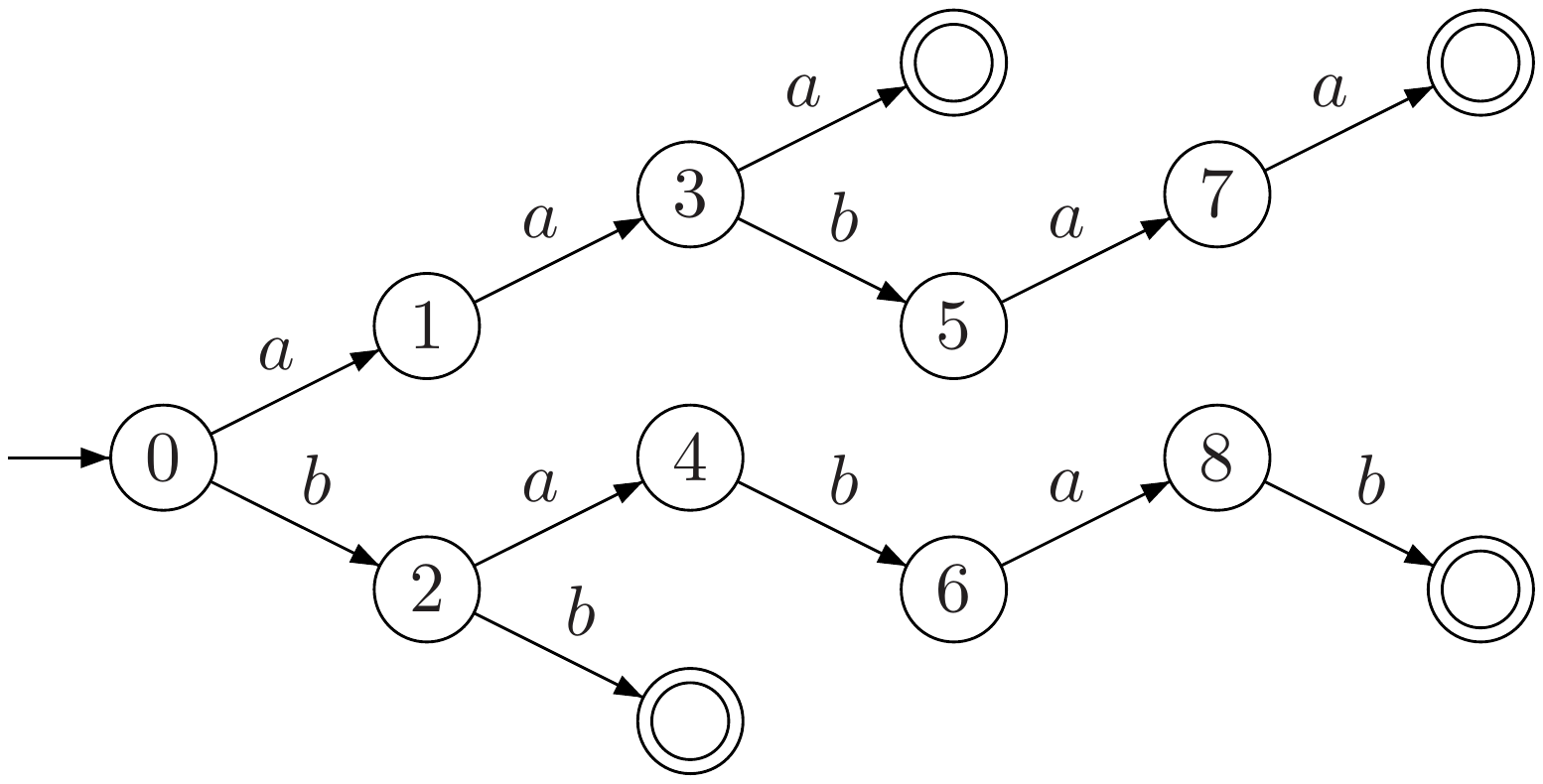}
\includegraphics[height=32mm]{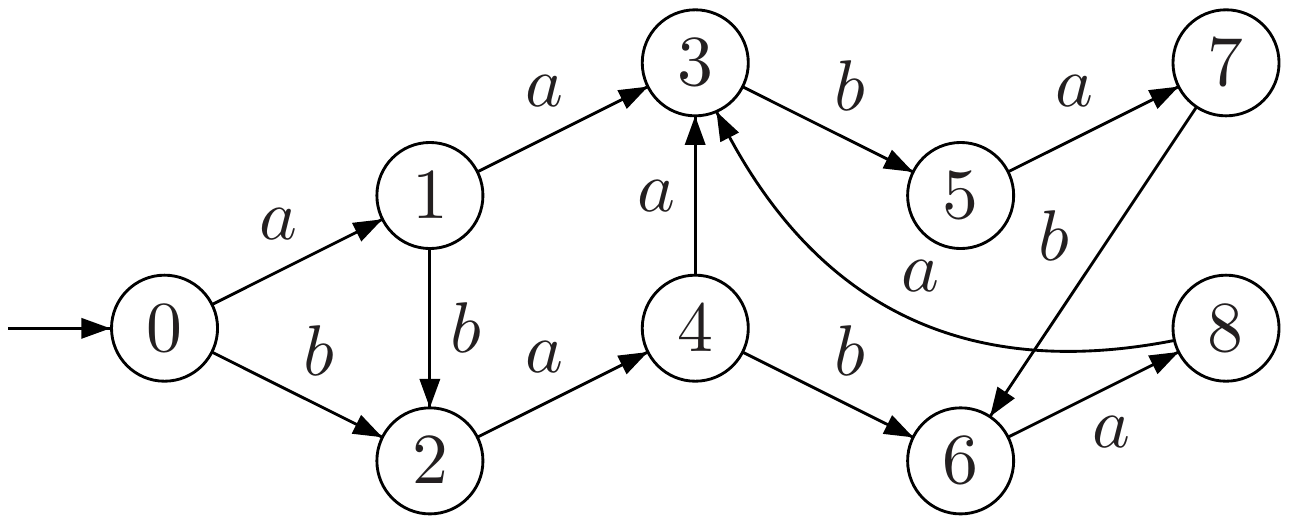}
\end{center}
\caption{The trie $\mathcal{T}$ recognizing the set $\mina_{[f_5]}$ (top), and the automaton output by algorithm  {\sc L-automaton} on the input trie $\mathcal{T}$ after removing sink states (bottom), which is the minimal automaton recognizing $\F_{f_5^*}$. It has $9=2F_5-1$ states.\label{fig:mff}}
\end{figure}

\section{Conclusions and Open Problems}

We investigated combinatorial properties of minimal forbidden factors of circular words.

We proved that the automaton built by  algorithm {\sc L-automaton} on the input trie recognizing the set of minimal forbidden factors of a circular word is minimal. More generally, it would be interesting to characterize those antifactorial languages for which algorithm {\sc L-automaton} builds a minimal automaton.

In addition to being interesting from the point of view of formal language theory, we believe the study of minimal forbidden factors of circular words will also lead to new applications in sequence analysis. An example on this direction is given in \cite{InfComp18}.

\section{Acknowledgements}

We thank anonymous referees for several valuable comments.

\bibliographystyle{abbrv}
\bibliography{ref}

\end{document}